\renewcommand{\vec}[1]{\mathbf{#1}}	% make vectors bold
\newtheorem{theorem}{Theorem} 	
\newtheorem{lemma}{Lemma}
\newcommand{\ignore}[1]{}
\algrenewcommand\algorithmicwhile{\textbf{nodes}}
\algrenewcommand\algorithmicrequire{\textbf{Precondition:}}
\algrenewcommand\algorithmicensure{\textbf{Postcondition:}}
\def\NoNumber#1{{\def\alglinenumber##1{}\State #1}\addtocounter{ALG@line}{-1}}
\begin{document}
%
% paper title
% Titles are generally capitalized except for words such as a, an, and, as,
% at, but, by, for, in, nor, of, on, or, the, to and up, which are usually
% not capitalized unless they are the first or last word of the title.
% Linebreaks \\ can be used within to get better formatting as desired.
% Do not put math or special symbols in the title.
\title{Position-Constrained Stochastic Inference for Cooperative Indoor Localization}
%
%
% author names and IEEE memberships
% note positions of commas and nonbreaking spaces ( ~ ) LaTeX will not break
% a structure at a ~ so this keeps an author's name from being broken across
% two lines.
% use \thanks{} to gain access to the first footnote area
% a separate \thanks must be used for each paragraph as LaTeX2e's \thanks
% was not built to handle multiple paragraphs
%

\author{Rico~Mendrzik,~\IEEEmembership{Student Member,~IEEE,}
        Gerhard~Bauch,~\IEEEmembership{Fellow,~IEEE}%
\thanks{R. Mendrzik and  G. Bauch are with the Institute of Communications, Hamburg University of Technology, Hamburg,
 21073 Germany e-mail.}}% <-this % stops a space
%\thanks{Manuscript received Month Day, Year; revised Month Day, Year.}}

% note the % following the last \IEEEmembership and also \thanks - 
% these prevent an unwanted space from occurring between the last author name
% and the end of the author line. i.e., if you had this:
% 
% \author{....lastname \thanks{...} \thanks{...} }
%                     ^------------^------------^----Do not want these spaces!
%
% a space would be appended to the last name and could cause every name on that
% line to be shifted left slightly. This is one of those "LaTeX things". For
% instance, "\textbf{A} \textbf{B}" will typeset as "A B" not "AB". To get
% "AB" then you have to do: "\textbf{A}\textbf{B}"
% \thanks is no different in this regard, so shield the last } of each \thanks
% that ends a line with a % and do not let a space in before the next \thanks.
% Spaces after \IEEEmembership other than the last one are OK (and needed) as
% you are supposed to have spaces between the names. For what it is worth,
% this is a minor point as most people would not even notice if the said evil
% space somehow managed to creep in.

% The paper headers
\markboth{}%
%\markboth{IEEE TRANSACTIONS ON SIGNAL AND INFORMATION PROCESSING OVER NETWORKS, VOL. X, NO. Y, MONTH YEAR}%
{Mendrzik \MakeLowercase{\textit{et al.}}:Position-Constrained Stochastic Inference for Cooperative Indoor Localization}
% The only time the second header will appear is for the odd numbered pages
% after the title page when using the twoside option.
% 
% *** Note that you probably will NOT want to include the author's ***
% *** name in the headers of peer review papers.                   ***
% You can use \ifCLASSOPTIONpeerreview for conditional compilation here if
% you desire.

% If you want to put a publisher's ID mark on the page you can do it like
% this:
%\IEEEpubid{0000--0000/00\$00.00~\copyright~2015 IEEE}
% Remember, if you use this you must call \IEEEpubidadjcol in the second
% column for its text to clear the IEEEpubid mark.

% use for special paper notices
%\IEEEspecialpapernotice{(Invited Paper)}

% make the title area
\maketitle

% As a general rule, do not put math, special symbols or citations
% in the abstract or keywords.
\begin{abstract}
We address the problem of distributed cooperative localization in wireless networks, i.e. nodes without prior position knowledge (agents) wish to determine their own positions. In non-cooperative approaches, positioning is only based on information from reference nodes with known positions (anchors). However, in cooperative positioning, information from other agents is considered as well. Cooperative positioning requires encoding of the uncertainty of agents' positions. To cope with that demand, we employ stochastic inference for localization which inherently considers the position uncertainty of agents. However, stochastic inference comes at the expense of high costs in terms of computation and information exchange. To relax the requirements of inference algorithms, we propose the framework of position-constrained stochastic inference, in which we first confine the positions of nodes to feasible sets. We use convex polygons to impose constraints on the possible positions of agents. By doing so, we enable inference algorithms to concentrate on important regions of the sample space rather than the entire sample space. We show through simulations that increased localization accuracy, reduced computational complexity, and quicker convergence can be achieved when compared to a state-of-the-art non-constrained inference algorithm.
\end{abstract}

% Note that keywords are not normally used for peerreview papers.
%\begin{IEEEkeywords}
%Indoor localization, constrained inference, nonparametric belief propagation, importance sampling.
%\end{IEEEkeywords}

% For peer review papers, you can put extra information on the cover
% page as needed:
% \ifCLASSOPTIONpeerreview
% \begin{center} \bfseries EDICS Category: 3-BBND \end{center}
% \fi
%
% For peerreview papers, this IEEEtran command inserts a page break and
% creates the second title. It will be ignored for other modes.
\IEEEpeerreviewmaketitle

\section{Introduction}
\label{section_1}
% The very first letter is a 2 line initial drop letter followed
% by the rest of the first word in caps.
% 
% form to use if the first word consists of a single letter:
% \IEEEPARstart{A}{demo} file is ....
% 
% form to use if you need the single drop letter followed by
% normal text (unknown if ever used by the IEEE):
% \IEEEPARstart{A}{}demo file is ....
% 
% Some journals put the first two words in caps:
% \IEEEPARstart{T}{his demo} file is ....
% 
% Here we have the typical use of a "T" for an initial drop letter
% and "HIS" in caps to complete the first word.
\subsection{Motivation and State of the Art}
\IEEEPARstart{C}{ooperative} localization is gaining more and more attention throughout the research community. The striking advantage over non-cooperative localization is that infrastructure requirements can be greatly relaxed \cite{SWW2007,PAT2005,WLW2009}, i.e. agents require fewer anchors to obtain unambiguous position estimates. Consider the example in Fig.\ref{fig:topology}, where two agents want to localize. Both agents have obtained distance estimates with respect to two anchors. With non-cooperative approaches, the agents cannot determine their positions unambiguously. On the other hand, cooperation among the agents could resolve the ambiguity because each agent becomes a virtual anchor (an anchor with some uncertainty) for the other agent. The primary challenge of cooperative localization breaks down to expressing the uncertainty of agents' positions accurately. 
\begin{figure}[t]
	\centering	      
	\includegraphics[width=.5\columnwidth]{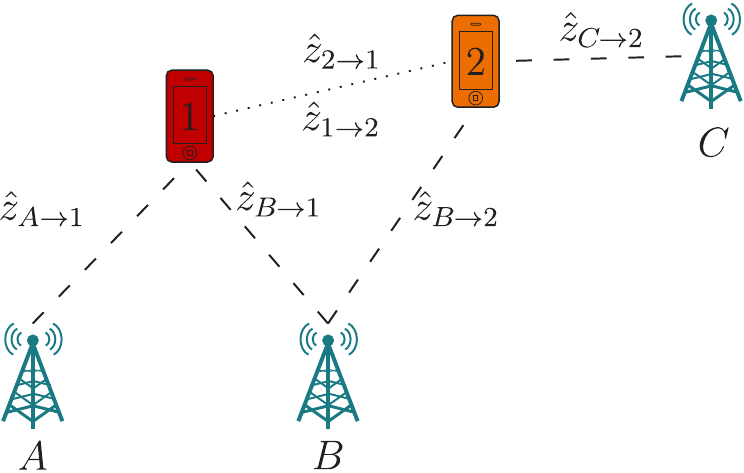}
	\caption{\textit{Running example} - Two agents (orange and red nodes) aim to localize their positions. Each agent has obtained distance measurements (dashed lines) to two anchors (green nodes). Agents cooperate to facilitate unambiguous localization (dotted line). We will use this topology example for the sake of illustration throughout the entire paper.}
	\label{fig:topology}
\end{figure}

Cooperative positioning algorithms can be broadly classified in deterministic and stochastic inference-based approaches. The former class encloses algorithms like (weighted) least squares (LS) \cite{GCW2007,SLK2002,NJSW2015,WCMSGDC2011}, geometric approaches \cite{GWSR2011,GWSR2011b,GWG2013}, multi-hop approaches \cite{NN2001,ND2003,N2004}, and many others. In many cases, deterministic approaches struggle to the account for the uncertainty regarding the positions of agents, and hence they perform poorly compared to stochastic approaches in terms of the positioning accuracy. Stochastic approaches inherently take the position uncertainty of nodes into consideration. Almost all stochastic inference-based approaches can be traced back to some variants of belief propagation (BP). BP is a message passing scheme, in which two main operations are performed, namely, message filtering and message multiplication \cite{YFW2005,LOE2004,KFL2001}. It is known from many inference problems in communication and coding \cite{KFL2001,K2003,MK2002}. In problems from the communication and coding domain, latent random variables are typically discrete. In that case, a belief assigns a certain probability to each state of a random variable (e.g. a bit is 0 with probability $p_0$ and 1 with probability $1-p_0$). In contrast to those problems, the latent random variables in cooperative localization are continuous. Variants of belief propagation that can cope with continuous random variables are based on parametric \cite{LFSWW2012,EMHSW2017,EDHS2013,VWS2012,LWWK2014}, nonparametric \cite{SIFW2003,IFMW2005,SIIFW2010,SZ2013,SZ2009}, and hybrid\footnote{Hybrid approaches employ tools from both the parametric and the nonparametric domain.} \cite{MHW2013,ZCWXZ2017} approaches. 

In parametric belief propagation (PBP), the true beliefs are approximated by parametrized distributions. The parameters of these distributions are determined by minimizing some divergence metric between the true belief and the approximating distribution. The main drawback of PBP is that the parameter determination is only feasible for certain families of distributions \cite{M2001}. Hence complex distributions cannot be represented arbitrarily close. In \cite{LFSWW2012}, the approximating distributions are members of the exponential family and Kullback-Leibler (KL) divergence is minimized to determine the parameters, while in \cite{VWS2012,EMHSW2017,EDHS2013,LWWK2014} the approximating distributions are restricted to the class of Gaussian distributions. In nonparametric belief propagation (NBP), particle representations (sets of samples with associated weights) are used to approximate the true beliefs. If a sufficient number of particles is used, any distribution can be approximated arbitrarily close \cite{KTB2011}. To compute the particle representation of the belief of an agent, importance sampling is employed \cite{SIFW2003,IFMW2005,SIIFW2010,SZ2013,SZ2009}. In importance sampling, the samples are drawn according to a proposal distribution. By adjusting the weights of the samples, the belief is determined. The proposal distributions should resemble the beliefs of agents closely to assure that the samples reside in regions where the beliefs have considerable probability mass. In \cite{SIFW2003,SIIFW2010,IFMW2005}, the proposal distribution is given by any of the messages passed in BP. In contrast to \cite{SIFW2003,SIIFW2010}, so-called parsimonious sampling is proposed in \cite{IFMW2005}, where the proposal distribution generates samples based on the beliefs of all neighbors from the previous iteration.  Other proposal distributions are presented in \cite{SZ2013,SZ2009}. In particular, mixture importance sampling with reference particles \cite{SZ2013} uses a proposal distribution that consists of the sum of all incoming messages from neighbors and sprinkles a certain percentage of the samples uniformly over the considered area to increase the robustness. Auxiliary sampling was also proposed in \cite{SZ2013}, where the proposal distribution is augmented with an auxiliary variable. The auxiliary variable takes the messages from anchors into account leading to samples that are more concentrated in the area close to the true location of a node. Boxed importance sampling is presented in \cite{SZ2009}, where the support of the proposal distribution is constrained heuristically by a rectangle. 
Hybrid belief propagation approaches employ both particle representations and parametrized distributions to represent BP messages and beliefs. Beliefs are represented by particles in \cite{MHW2013}, while other messages are approximated by parametrized distributions. A similar approach is chosen in \cite{ZCWXZ2017}. Messages are represented by members of the exponential family, while beliefs are computed based on particle representations. The proposal distribution, which is used in \cite{ZCWXZ2017}, draws samples based on the results of the previous BP iteration. In general, the proposal distributions which are presented in literature do not resemble the beliefs closely, and particles are utilized inefficiently. 

\begin{figure}[t]
	\centering	      
	\includegraphics[width=.6\columnwidth]{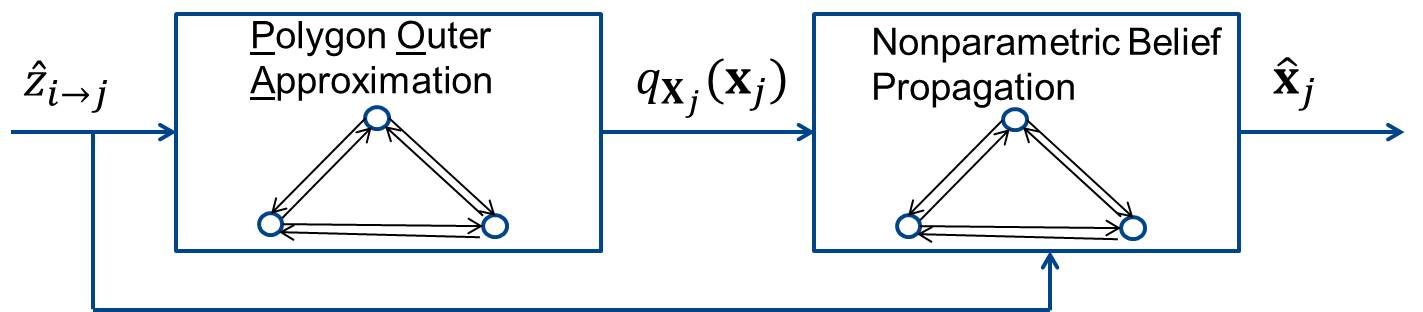}
	\caption{\textit{Schematic of our approach} - Based on the distance estimates $\hat{\mathrm{z}}_{i\rightarrow j}$, we determine our constrained proposal distributions $q_{\vec{X}_j}(\vec{x}_j)$ using POA. Subsequently, we obtain estimates of the positions of the agents $\hat{\vec{x}}_j$ by employing NBP with our novel proposal distribution.}
	\label{fig:two_stage_approach}
\end{figure}
\subsection{Contribution and Paper Organization}
Our essential finding is that constraints on the positions of agents can be leveraged in the inference problem. We propose to  pursue a two-phase approach: first we employ a cheap algorithm called polygon outer-approximation (POA) to determine a set of constraints on the sample space based on the distance estimates $\hat{z}_{i\rightarrow j}$. We use these constraints to determine the proposal distributions $q_{\vec{X}_j}(\vec{x}_j)$. Secondly, we harness these constraints in order to ease the inference problem which we use to obtain the position estimates $\hat{\vec{x}}_j$. Fig. \ref{fig:two_stage_approach} depicts the high-level overview of our approach. Employing our constrained proposal distribution for NBP increases estimation accuracy and relaxes computational requirements. 

The following list contains the main contributions of this paper:
\begin{itemize}
	\item We derive the conditions which have to be met such that the positions of agents can be constrained. In addition, we show that these conditions are fulfilled in the context of ultra-wideband indoor localization.
	\item We develop an algorithm to tightly constrain the positions of agents. 
	\item Based on these constraints, we introduce a novel type of proposal distribution that inherently allows drawing samples from the important regions of the sample space.
	\item Through simulations, we show that increased localization accuracy, quicker convergence, and reduced computation time can be achieved when compared to a state-of-the-art proposal distribution.
\end{itemize}

The remainder of this paper is organized as follows. Section \ref{section_2} introduces our system model and reviews nonparametric belief propagation. In section \ref{section_3}, we derive the conditions necessary to constrain support of the beliefs. We describe our algorithm which outer-approximates the support of beliefs by polygons, and  we show how to exploit the polygon outer-approximations for sampling. Finally, we present our constrained proposal distribution in section \ref{section_4}. Section \ref{section_5} contains the numerical evaluation of our proposal distribution. Section \ref{section_6} concludes the paper.

\section{Fundamentals and Assumptions}
\label{section_2}
In this section, we briefly introduce the system and measurement model. Subsequently, we concisely review the concept of nonparametric belief propagation.
\subsection{System Model}
We consider the problem of cooperative indoor localization using nonparametric belief propagation. 
\subsubsection{Network Topology}
In our setup, agents want to determine their positioning using distance estimates to neighboring agents and anchors. We treat the positions of agents as random variables. The position of agent $j$ is denoted by $\vec{X}_j$. Anchor coverage is assumed to be sparse, i.e. the majority of agents sees only a single anchor. We assume that all nodes are static. Agents can determine range estimates $\hat{z}_{i\rightarrow j}$ to neighboring nodes if neighbors are inside the communication range $r_{\mathrm{com}}$. We denote the set of neighbors of agent $j$ by $\mathcal{S}_{\rightarrow j} = \{ \forall i\neq j ~|~ \left\|\vec{x}_i-\vec{x}_j \right\|_2 \leq r_{\mathrm{com}}\}$. 
\subsubsection{Measurement Model}
\label{subsub:meas_model}
Distance estimates are given by 
\begin{equation}
\hat{z}_{i\rightarrow j} = \left\|\vec{x}_i-\vec{x}_j\right\|_2 + \epsilon_{i\rightarrow j},
\label{eq:dist_est}
\end{equation}
where $ \left\|\vec{x}_i-\vec{x}_j\right\|_2$ is the true distance between node $i$ and node $j$, and $\epsilon_{i\rightarrow j}$ denotes the ranging error. The ranging error generally depends on the receiver characteristics, e.g., thermal noise, distance estimation algorithm, and the environment of transmission. In the indoor environment, multipath and non-line-of-sight (NLOS) propagation are especially critical \cite{LS2002,DCFGW2009,JW2008,FDMW2006}. For instance, NLOS and multipath propagation introduce positive biases in distance estimates. Even in the presence of a line-of-sight (LOS) path, positive ranging errors are mostly observed, because time-of-flight-based distance estimation algorithm set a threshold such that the false alarm (noise peaks causing distance estimates in the absence of a transmission) probability is small. Hence, negative ranging errors are extremely unlikely, if they occur at all \cite{MGWW2010,WMGW2012}. We note that our POA algorithm relies on the assumption of non-negative ranging errors to determine constraints of the sample space. Later in this paper, we show that the true location of a node can only be guaranteed to reside inside the determined polygon if $\epsilon_{i\rightarrow j}\geq 0$. However, in the unlikely case of a negative ranging error, the corresponding distance estimate can be discarded from our proposed POA algorithm, and the resulting polygon is still guaranteed to contain the true location of a node. We employ the model used in \cite{MB2016,VWS2012,GWSR2011b}. The model assumes one-sided non-negative exponentially distributed ranging errors resulting in the likelihood function
\begin{equation}
p(\hat{z}_{i\rightarrow j}|\mathbf{x}_j,\mathbf{x}_i)=
\begin{cases}
\lambda e^{-\lambda\left| \left\|\mathbf{x}_j-\mathbf{x}_i\right\|-\hat{z}_{i\rightarrow j}\right|}, & \hat{z}_{i\rightarrow j}\geq \left\|\mathbf{x}_j-\mathbf{x}_i\right\|\\
0, &\text{else},
\end{cases}
\label{eq:exp_dist}
\end{equation}
where $1/\lambda$ is the mean ranging error. 
\subsection{Nonparametric Belief Propagation for  Cooperative \\Localization}
Let us briefly review the concept of nonparametric belief propagation, with its main two pillars: message filtering and messages multiplication. Note that our aim is not to provide a detailed derivation but to concisely review the concept. An excellent tutorial of the algorithm can be found in \cite{WLW2009}. 

We start with the following factorization of the joint a posteriori distribution of the positions of all agents $\vec{X}$ given all distance estimates $\mathbf{\hat{Z}}$
\begin{equation}
	 p_{\mathbf{X}|\mathbf{\hat{Z}}}(\mathbf{x}|\mathbf{\hat{z}})=\prod_j \prod_{i\in S_{\rightarrow j}} p(\hat{z}_{i \rightarrow j}|\mathbf{x}_i,\mathbf{x}_j)p_{\mathbf{X}_j}(\mathbf{x}_j).
	\label{eq:factorized_APD}
\end{equation} 
\begin{figure}[t]
	\centering	      
	\includegraphics[width=.5\columnwidth]{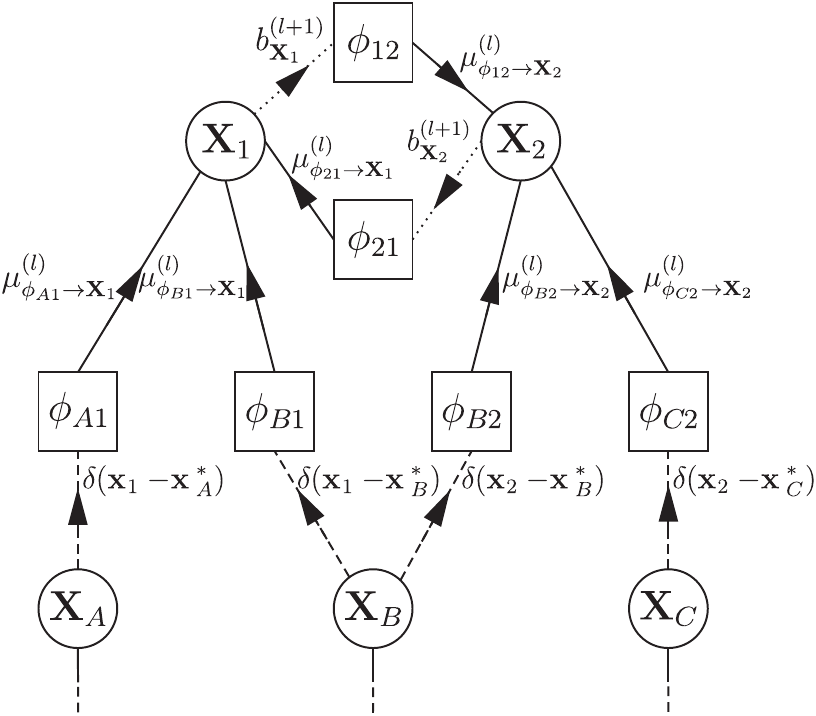}
	\caption{\textit{Factor graph of the running example from Fig. \ref{fig:topology}} - Variable nodes are shown as circular vertices, factor nodes as rectangular vertices, and messages are indicated by arrows. Wireless links are indicated by dotted (inter-agent) and dashed (anchor-agent) lines according to Fig. \ref{fig:topology}. Solid edges determine intra-agent messages.}
	\label{fig:factor_graph}
\end{figure}
NBP employs message passing to determine the marginal distributions $p_{\mathbf{X}_j|\mathbf{\hat{Z}}}(\mathbf{x}_j|\mathbf{\hat{z}}),\forall j$, based on which the position estimates are obtained. In the following, we describe the messages which we encounter in NBP. The factorization in \eqref{eq:factorized_APD} can be visualized by a factor graph, which is a bipartite graph that contains factor nodes, variable nodes, and edges to connect the nodes \cite{W2007}. An exemplary factor graph is depicted in Fig. \ref{fig:factor_graph}. Factor nodes and variable nodes perform a filtering and multiplication operation, respectively. 

Every factor node computes one outgoing message based on its incoming message\footnote{In general, factors nodes can have more than one incoming and outgoing message. For cooperative localization, however, factor nodes have only \textit{one} incoming and outgoing message \cite{WLW2009}.}. Each factor node $\vec{X}_i\in \mathcal{S}_{\rightarrow j}$ computes the outgoing message to variable node $\vec{X}_j$ in the $l^{\text{th}}$ iteration according to the following rule \cite{W2007}
\begin{equation}
	\mu_{\phi_{ij} \rightarrow \mathbf{X}_j}^{(l)}(\mathbf{x}_j)	\propto \int \phi_{ij}(\hat{z}_{i\rightarrow j}|\mathbf{x}_i,\mathbf{x}_j) \cdot \mu_{\mathbf{X}_i \rightarrow \phi_{ij}}^{(l)}(\mathbf{x}_i) \text{d} \mathbf{x}_i,
	\label{eq:message_to_variable_node}
\end{equation}
where $\mu_{\mathbf{X}_i \rightarrow \phi_{ij}}^{(l)}(\mathbf{x}_i)$ denotes the incoming message from $\vec{X}_i$ and $\phi_{ij}(\hat{z}_{i\rightarrow j}|\mathbf{x}_i,\mathbf{x}_j)\triangleq p(\hat{z}_{i \rightarrow j}|\mathbf{x}_i,\mathbf{x}_j)$. Let $\phi_{ij}$ be a shorthand notation for $\phi_{ij}(\hat{z}_{i\rightarrow j}|\mathbf{x}_i,\mathbf{x}_j)$. The operation in (\ref{eq:message_to_variable_node}) is called \textit{message filtering}. Note that the integral in (\ref{eq:message_to_variable_node}) cannot be solved in closed form for nonlinear $\phi_{ij}$ (see e.g. (\ref{eq:exp_dist})) and/or arbitrary $\mu_{\mathbf{X}_i \rightarrow \phi_{ij}}^{(l)}(\mathbf{x}_i)$ \cite{AMGC2002}. Instead, we employ a particle representation to approximate the resulting message, i.e. the continuous message in (\ref{eq:message_to_variable_node}) is approximated by a set of weighted samples 
\begin{equation}
\mathcal{R}{_N}_s\left(\mu_{\phi_{ij} \rightarrow \mathbf{X}_j}^{(l)}(\mathbf{x}_j)\right)=\left\{ w_{ij}^{(k,l)},\mathbf{x}_{ij}^{(k,l)}\right\}_{k=1}^{N_s},
\label{eq:particle_representation}
\end{equation}
where $\mathcal{R}{_N}_s\left(\cdot \right)$ denotes the particle representation of a continuous function using $N_s$ particles. In the context of cooperative localization, message filtering reduces to directly drawing $N_s$ samples with equal weight \cite{LFSWW2012}. Hence message filtering shows linear complexity in the number of samples. Consequently, the complexity is $\mathcal{O}(N_s)$.

Every variable node computes its outgoing message in the $l^{\text{th}}$ iteration as the product of all incoming messages
\begin{equation}
	\mu_{\mathbf{X}_j \rightarrow \phi_{jk}}^{(l+1)}(\mathbf{x}_j)\propto \prod_{i \in S_{\rightarrow j}} \mathcal{R}{_N}_s\left(	\mu_{\phi_{ij} \rightarrow \mathbf{X}_j}^{(l)}(\mathbf{x}_j)\right).
	\label{eq:message_multiplication}
\end{equation}
This operation is called \textit{message multiplication}. The outgoing message in (\ref{eq:message_multiplication}) also constitutes the belief, i.e. $b^{(l+1)}_{\mathbf{X}_j}(\mathbf{x}_j)=\mu_{\mathbf{X}_j \rightarrow \phi_{jk}}^{(l+1)}(\mathbf{x}_j)$. That peculiarity gives rise to the name of the algorithm: belief propagation.  Note that messages ${\mathcal{R}{_N}_s\left(	\mu_{\phi_{ij} \rightarrow \mathbf{X}_j}^{(l)}(\mathbf{x}_j)\right)}$ are given as particle representations. Since the samples $\mathbf{x}_{ij}^{(k,l)}$ are drawn randomly and from independent proposal distributions, they will be distinct with probability one. Direct message multiplication is therefore not possible. To enable multiplication, interpolated versions of these messages are determined using kernel density estimation \cite{IFMW2005}. These densities can, then, be multiplied. Recall that in kernel density estimation, each particle $\left\{ w_{ij}^{(k,l)},\mathbf{x}_{ij}^{(k,l)}\right\}$ \cite{KTB2011}  is coated with a continuous kernel and the superposition of all $N_s$ kernels yields the resulting density
\begin{equation}
\hat{\mu}_{\phi_{ij} \rightarrow \mathbf{X}_j}^{(l)}(\mathbf{x}_j)=\sum_{k=1}^{N_s}{w_{ij}^{(k,l)} K(\mathbf{x}_j;\mathbf{x}_{ij}^{(k,l)},\hat{\boldsymbol\Sigma}_{ij})},
\label{eq:KDE_message}
\end{equation}
where $\hat{\boldsymbol\Sigma}_{ij}$ is estimated using a kernel density estimator. We consider the least squares cross validation estimator from \cite{KTB2011}. Multiplying the kernel density estimates is possible in closed form. However, it requires $\mathcal{O}(N_s^{|\mathcal{S}_{\rightarrow j}|})$ computations, i.e. it scales exponentially in the number of messages. Therefore, we resort importance sampling to approximate the resulting density by a particle representation, i.e. our goal is to obtain a particle representation of the \textit{product of messages} without computing the product explicitly. In importance sampling, we draw $N_s$ samples, $\mathbf{x}_j^{(k,l+1)} ~ k=1,...,N_s$, from a suitable \textit{proposal distribution}, $q_{\mathbf{X}_j}(\mathbf{x}_j)$ \cite{KTB2011}. To obtain a particle representation, we have to assign a proper weight to each sample. The weight accounts for the mismatch of the proposal distribution and the target distribution, which we wish to approximate \cite{KTB2011}. To compute the unnormalized weight of each sample $\tilde{v}_j^{(k,l+1)}$, the quotient of the product of the kernel density estimates of the messages $\prod_{i \in S_{\rightarrow j}} \hat{\mu}_{\phi_{ij} \rightarrow \mathbf{X}_j}^{(l)}(\mathbf{x}_j)$ and the proposal distribution $q_{\mathbf{X}_j}(\mathbf{x}_jin)$ is evaluated for each sample $\mathbf{x}_j^{(k,l+1)}$, i.e.
\begin{equation}
	\tilde{v}_j^{(k,l+1)} \propto  \frac{\prod_{i \in S_{\rightarrow j}} \hat{\mu}_{\phi_{ij} \rightarrow \mathbf{X}_j}^{(l)}(\mathbf{x}_j^{(k,l+1)})}{q_{\mathbf{X}_j}(\mathbf{x}_j^{(k,l+1)})}.
	\label{eq:weight_general}
\end{equation}
The particle representation of the product of messages is then given by the samples $\mathbf{x}_j^{(k,l+1)}$ drawn according to the proposal distribution $q_{\mathbf{X}_j}(\mathbf{x}_j)$ and the normalized weights $v_j^{(k,l+1)}=\tilde{v}_j^{(k,l+1)}/\sum_{k=1}^{N_s}{\tilde{v}_j^{(k,l+1)}}$
\begin{equation}
	\mathcal{R}{_N}_s\left(\mu^{(l+1)}_{\mathbf{x}_j\rightarrow f_k}(\mathbf{x}_j)\right)=\left\{v_j^{(k,l+1)},\mathbf{x}^{(k,l+1)}_j \right\}_{k=1}^{N_s}.
	\label{eq:PR_prod_of_messages}
\end{equation}
\textit{Note that the computationally intensive part is not sampling, but adjusting the weights.} From (\ref{eq:weight_general}), we can infer that adjusting the weights, and thus, message multiplication scales quadratic in the number of samples and linear in the number of incoming messages, $\mathcal{O}\left(|S_{\rightarrow j}|N_s^2\right)$. 

To accurately approximate the product of messages, the proposal distribution $q_{\mathbf{X}_j}(\mathbf{x}_j)$ should generate samples that reside in regions that are close to the true location. Most proposal distributions generate samples by just taking the samples of one of the incoming messages \cite{SIFW2003,SIIFW2010,IFMW2005}. Recall that incoming messages are given as particle representations, and samples are readily obtained. More advanced proposal distributions \cite{IFMW2005}\footnote{Note that in \cite{IFMW2005} two proposal distributions were presented. The more advanced proposal distribution is parsimonious sampling.},\cite{SZ2013,SZ2009}, aim to concentrate the samples in the region of the sample space where the product of messages in \eqref{eq:message_multiplication} has significant probability mass. Our goal is to constrain the support\footnote{The support of a distribution is the part of the sample space with non-zero probability.} of each marginal in order to determine proposal distributions which draw samples only from the relevant regions of the sample space. In other words, if we know that the true marginals have zero probability mass in certain areas of the sample space, it is not worth generating any samples in those areas because their weight will be zero in anyway.

Message multiplication and message filtering are executed iteratively until the beliefs approximate the true marginals closely. We can obtain an estimate on the position of a node in every iteration based on its current belief. Since the belief is given as particle representation $\{v_j^{(k,l+1)},\mathbf{x}_j^{(k,l+1)}\}_{k=1}^{N_s}$, an MMSE estimate of $\mathbf{X}_j$ is obtained by computing the centroid of the particle cloud, i.e. $\hat{\mathbf{x}}_j^{(l+1)}=\sum_{k=1}^{N_s}{v_j^{(k,l+1)} \mathbf{x}_j^{(k,l+1)}}$ \cite{WLW2009}.

\section{Support Outer-Approximation}
\label{section_3}
In this section, we prove the general conditions which need to be met in order to outer-approximate the support of a marginal a posteriori distribution. We conclude this section by showing that these conditions are met in cooperative indoor localization. 

\subsection{General Conditions for Support Outer-Approximation}
Let $p_{\vec{X}_j|\vec{\hat{Z}}}(\vec{x}_j|\vec{\hat{z}})$ be a marginal a posteriori distribution of the factorized joint distribution $p_{\vec{X}|\vec{\hat{Z}}}(\vec{x}|\vec{\hat{z}}) = \prod_{k=1}^K {f_k(\vec{s}_k)}$. The set of random variables which are argument of the $k^{\text{th}}$ factor $f_k$ is denoted by $\vec{s}_k$. Moreover, we define the index set of all factors $f_k$ whose argument contains $\mathbf{x}_j$, i.e. $\mathcal{F}_j = \{k | \mathbf{s}_k \cap \mathbf{x}_j \neq \emptyset \}$.  Our goal is to show that under certain technical conditions
\begin{equation}
	\textsc{supp}\left(p_{\vec{X}_j|\vec{\hat{Z}}}(\vec{x}_j|\vec{\hat{z}})\right)
	\triangleq
	\bigcap_{k\in\mathcal{F}_j } \textsc{supp}\left(f_k \right)
\label{eq:supp_MAP}
\end{equation}
is \textit{compact} and \textit{convex}. Note that the support of the marginals in \eqref{eq:supp_MAP} is generally neither compact nor convex. In the following, we prove the conditions on the factorized joint distribution that need to be fulfilled in order to guarantee that the support of the marginals is compact and convex. In particular, we need compactness to outer-approximate the support, while we need convexity of the support to efficiently the compute outer-approximating polygons. 

\begin{theorem}
The support of the marginal a posteriori distribution $p_{\vec{X}_j|\vec{\hat{Z}}}(\vec{x}_j|\vec{\hat{z}})$ is compact and convex if and only if the supports of the factors $f_k(\vec{s}_k),~ \forall k\in\mathcal{F}_j $ of the joint a posteriori distribution $p_{\vec{X}|\vec{\hat{Z}}}(\vec{x}|\vec{\hat{z}})$ are convex and at least one factor $f_k(\vec{s}_k),~ k\in\mathcal{F}_j$ is compact.
\label{theorem}
\end{theorem}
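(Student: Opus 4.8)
The plan is to reduce the biconditional to an elementary statement about the finite intersection in \eqref{eq:supp_MAP}. By \eqref{eq:supp_MAP} the marginal support equals $\bigcap_{k\in\mathcal{F}_j}\textsc{supp}(f_k)$, so it suffices to characterise when this intersection of finitely many sets is compact and convex. I would treat each $\textsc{supp}(f_k)$ as a closed subset of the sample space of $\vec{x}_j$ (topological supports are closed sets) and prove the two implications separately, relying only on two standard facts: intersections preserve convexity, and in $\mathbb{R}^n$ compactness is equivalent to being closed and bounded (Heine--Borel).

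For sufficiency I would first establish convexity: an arbitrary intersection of convex sets is convex, so if every $\textsc{supp}(f_k)$, $k\in\mathcal{F}_j$, is convex then $\bigcap_{k\in\mathcal{F}_j}\textsc{supp}(f_k)$ is convex, and hence so is the marginal support. For compactness I would use that a finite intersection of closed sets is closed, and that if at least one factor $f_{k_0}$ has compact --- in particular bounded --- support, then the intersection is contained in the bounded set $\textsc{supp}(f_{k_0})$ and is therefore itself bounded. A closed and bounded subset of $\mathbb{R}^n$ being compact then yields compactness of the marginal support.

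For necessity I would argue by contraposition: assuming a hypothesis fails, I would produce a factorization compatible with the model for which the conclusion breaks. If some $\textsc{supp}(f_{k_0})$ is non-convex, I would take two of its points whose joining segment escapes it and let the remaining factor supports contain that segment, so the non-convexity is transmitted to the intersection. If all factor supports are convex but none is compact (hence all are closed and unbounded), I would exhibit a configuration --- for instance all supports coinciding with a common unbounded convex set --- in which the intersection is unbounded and compactness fails.

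I expect this necessity direction to be the crux, for a structural reason that must be handled with care: for a single fixed family the intersection of non-compact, even non-convex, sets can still be compact and convex --- two orthogonal infinite strips meet in a square, for instance --- so the converse is false if read for one given instance. The correct reading, and the one I would make precise, is that the stated conditions are exactly those needed to \emph{guarantee} compactness and convexity for every admissible factorization; the work then lies in constructing a violating factorization whenever a hypothesis is dropped, and in pinpointing how the convexity hypothesis on all factors is what makes the ``contained in a bounded set'' step of the compactness argument valid. Isolating that role of convexity is where the genuine care is required.
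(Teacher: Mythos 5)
Your sufficiency argument follows essentially the same route as the paper: reduce to the intersection $\bigcap_{k\in\mathcal{F}_j}\textsc{supp}(f_k)$, use the fact that intersections preserve convexity, and get compactness from ``closed, and contained in the bounded support of the one compact factor.'' One caveat on the reduction itself: you take the identity in \eqref{eq:supp_MAP} as given, whereas the paper spends two of its three lemmas establishing it (Lemma~\ref{lemma:domain_product}: the support of a product of densities is the intersection of the individual supports; Lemma~\ref{lemma:independent_domain}: integrating out the remaining variables does not alter the support in $\vec{x}_j$). If you lean on \eqref{eq:supp_MAP}, you should at least acknowledge that it is the content of those lemmas rather than a definition.

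Where you genuinely depart from the paper is the necessity direction, and your diagnosis there is correct and important. For a fixed factorization the converse is simply false: your orthogonal-strips example kills the ``only if'' for compactness, and two non-convex sets (e.g.\ two L-shapes) can intersect in a convex set, killing the ``only if'' for convexity. The paper's own proof of this half rests on the claim in Lemma~\ref{lemma:convexity} that ``the intersection of sets is convex if and only if all sets are convex,'' citing \cite[p.~36]{BV2004} --- but that reference (and the truth) supplies only the ``if'' direction, and the paper offers no argument at all for the ``only if'' of compactness. So the paper's proof of the stated biconditional is incomplete, and your proposal to reread the theorem as a statement about what \emph{guarantees} compactness and convexity across all admissible factorizations, proved by exhibiting a violating instance whenever a hypothesis is dropped, is a legitimate repair --- but it proves a different (quantified) statement than the one written, and the counterexample constructions still have to be carried out and shown compatible with the probabilistic model (nonnegative factors integrating to something finite). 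In short: sufficiency is fine and matches the paper; necessity as literally stated cannot be proved because it is false, and you have correctly located the crux that the paper glosses over.
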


\begin{proof}
 See Lemma (\ref{lemma:domain_product})-(\ref{lemma:convexity}) in the Appendix.
\end{proof}

\subsection{Applicability of Support Outer-Approximation to Ultra-Wideband Indoor Localization}
Recall that we assume that every agent $j=1,...,N$ has obtained range estimates $\hat{z}_{i\rightarrow j}$ w.r.t. all nodes $i\in S_{\rightarrow j}$. Ranging is performed using time-of-flight estimates with ultra-wideband radios. As discussed in section \ref{subsub:meas_model}, the errors which corrupt range estimates are non-negative \cite{MGWW2010,WMGW2012}, i.e.

\begin{equation}
	\left\|\mathbf{x}_{i}-\mathbf{x}_{j}\right\|\leq\hat{z}_{i\rightarrow j}.
\label{eq:distance_ineq}
\end{equation}
Let us recall the factorization from \eqref{eq:factorized_APD}
\begin{equation}
 p_{\mathbf{X}|\mathbf{\hat{Z}}}(\mathbf{x}|\mathbf{\hat{z}})=\prod_j p_{\mathbf{X}_j}(\mathbf{x}_j)\prod_{i\in S_{\rightarrow j}} p(\hat{z}_{i \rightarrow j}|\mathbf{x}_i,\mathbf{x}_j).
\label{eq:factorized_APD2}
\end{equation}
To prove the applicability of \textbf{Theorem \ref{theorem}} to ultra-wideband indoor localization, we show that every factor in (\ref{eq:factorized_APD2}) is compact and convex. Let us start with the factors of the observation model, $p(\hat{z}_{i\rightarrow j}|\vec{x}_j,\vec{x}_i), \forall i,j$. Regardless of the shape of the density, (\ref{eq:distance_ineq}) ensures that each factor $p(\hat{z}_{i\rightarrow j}|\vec{x}_j,\vec{x}_i)$ of the observation model will be convex and compact, i.e. as long as the distance estimation error is non-negative, the support of the marginal is \textit{guaranteed to be} compact and convex. For instance, let $\vec{x}_i$ be an anchor with true position $\vec{x}_i^{*}$  and $\vec{x}_j$ is an agent. If (\ref{eq:distance_ineq}) is true, the true distance is overestimated and agent $j$ must be inside the disk with radius $\hat{z}_{i\rightarrow j}$ and center $\vec{x}_i^{*}$ with probability one. Now let us consider the prior densities. We typically assume that agents do not have prior knowledge on their position, i.e. the prior distributions $p_{\mathbf{X}_j}(\mathbf{x}_j), \forall j$ are uniform over the plane in which agents want to localize. It can be readily seen that the supports of the priors are convex and compact. Note that other, more informative priors can be assumed, as long as their supports are convex, e.g. a Gaussian prior with arbitrary mean and covariance fulfills these requirements. Thus, the requirements of \textbf{Theorem \ref{theorem}} are met. 

In the following section, we describe how we outer-approximate the support of the marginal a posteriori distributions. Since we are considering two-dimensional positions (localization in a plane), we use polygons which tightly outer-approximate the supports of the marginal a posteriori distributions. Confining the support of the marginal a posteriori distributions of the positions of agents obviously constrains their possible positions. Hence we can employ these polygons to concentrate the attention of the subsequently executed NBP to the relevant region of the sample space.

\textit{Remark:} Our findings from \textbf{Theorem \ref{theorem}} extend beyond the problem of cooperative localization. We emphasize that the support of any marginals can be outer-approximated if the conditions of \textbf{Theorem \ref{theorem}} are met. Only the geometric shape of the outer-approximating objects has to be adjusted to the problem at hand. Thus, it can be verified with \textbf{Theorem \ref{theorem}} whether an inference problem can be treated as a constrained inference problem, which generally simplifies the inference procedure.

\section{Confining the Positions of Nodes}
\label{section_4}
This section contains the main contribution of this paper. We explain how we obtain our novel proposal distributions. First, we outline how we outer-approximate the support of the marginal distributions by polygons. Subsequently, we describe how we embed this side information in the inference problem by incorporating the constraints. Finally, we show how to draw samples from our proposal distributions.
\subsection{Polygon Support Outer-Approximation}
In the previous section, we showed that outer-approximation of the support of the marginal a posteriori distributions is applicable to ultra-wideband indoor localization. We present an algorithm in this section, which confines the location of every agent to a convex polygon. Ideally, for every agent $j$,we wish to find the smallest convex polygon $\mathcal{V}_{j}$ that outer-approximates the support of the marginal distribution $p_{\vec{X}_j|\vec{\hat{Z}}}(\vec{x}_j|\vec{\hat{z}})$, i.e.
\begin{flalign}
 \text{minimize} &\qquad\mathrm{Area}\left({\mathcal{V}_{j}}\right)
 \label{eq:objective1} 											\\  
	\text{subject to} &\qquad 	\textsc{supp}\left(p_{\vec{X}_j|\vec{\hat{Z}}}(\vec{x}_j|\vec{\hat{z}})\right) \subseteq \mathcal{V}_{j}.
	\label{eq:constraint} 																			
\end{flalign}
Since this problem is generally hard to solve, our proposed polygon outer-approximation algorithms attempts to attain the optimal polygon $\mathcal{V}_{j}$ in \eqref{eq:objective1} and \eqref{eq:constraint}. In general, there is no guarantee that the optimal solution is achieved. 

In our previous publication \cite{MB2016}, we used ellipses to outer-approximate the support of marginal a posteriori distributions. In \cite{MB2016}, we adopted an algorithm from Gholami et al. which has its roots in geometrical positioning \cite{GWG2013,GWSR2011,GHO2011}. The algorithm is called \textit{distributed bounding of feasible sets }and it confines the locations of agents to ellipses. To determine the ellipses efficiently, a convex problem formulation is considered. This formulation guarantees convexity but it comes at the expense of unnecessarily loose outer-approximations of the supports. We empirically show in section \ref{section_5} that our polygons-based approach achieves significantly tighter outer-approximations when compared to the elliptical-approach from \cite{GWG2013,GWSR2011,GHO2011}.

Polygon support outer-approximation (POA) iteratively determines a polygon for each agent. Within this polygon, the agent resides with probability one. Two main operations are executed alternately, namely \textit{polygon scaling} and \textit{polygon intersection}. Note that both operations preserve convexity \cite{BV2004}. We emphasize this property when we describe the operations in detail.

In this paragraph, we describe the general procedure of polygon outer-approximation. Subsequently, we describe the operations in detail and use the running example to graphically visualize the operations. 
In all iterations, agent $j$ first receives polygons from the set of neighbors, $\mathcal{V}_{i}^{(l)},~ \forall i \in \mathcal{S}_{\rightarrow j}$. Then, agent $j$ extends these polygons, $\mathcal{V}_{i}^{(l)}$, by the distance measurement $\hat{z}_{i\rightarrow j}$. This operation is called \textit{polygon scaling}. The resulting (scaled) polygons are denoted by $\mathcal{V}_{ij}^{(l)}$. The intersection of the scaled polygons $\mathcal{V}_{ij}^{(l)},~ \forall i \in \mathcal{S}_{\rightarrow j}$ constitutes the polygon of agent $j$ in the next iteration $\mathcal{V}_{j}^{(l+1)}$. This operation is called \textit{polygon intersection}. In contrast to circles and ellipses, the intersection of multiple convex polygons can be determined very efficiently. This procedure is executed in parallel by all agents. After every iteration, agents broadcast their polygons to all neighbors. After a sufficient number of iterations, the size of the polygons converges. Convergence is observed empirically. It is not proven theoretically and requires investigation in future studies. The polygon that each agent has obtained, tightly outer-approximates the support of its marginal a posteriori distribution. The pseudo-code for the algorithm is shown in Algorithm \ref{alg:POA} and a flow diagram is depicted in Fig. \ref{fig:flowchart_POA}.
\begin{figure}[t]
	\centering	      
	\includegraphics[width=.6\columnwidth]{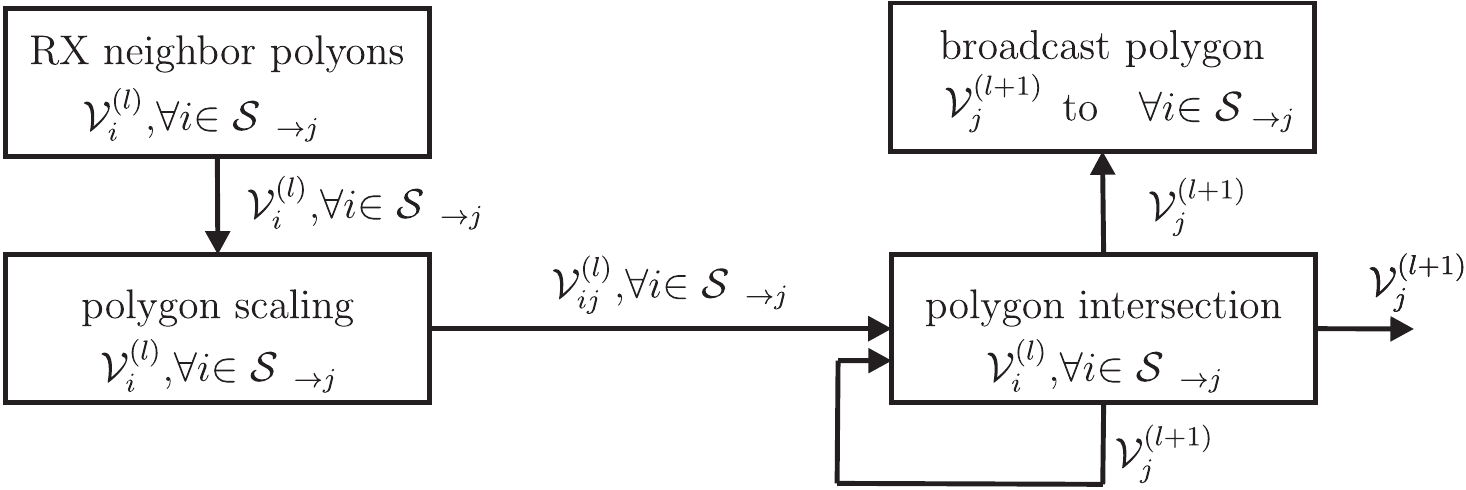}
	\caption{\textit{POA flow diagram}.}
	\label{fig:flowchart_POA}
\end{figure}

Recall that we assume that agents do not have prior knowledge regarding their position. With this assumption, the \textit{first iteration} differs from all subsequent iterations since only the information from anchors is considered. For other priors, information from other agents is also considered. We focus our description on the former case. Let $\mathcal{A}_{\rightarrow j}\subseteq S_{\rightarrow j}$ denote the set of anchors in the communication range of $j$. Then, only  the measurements $\hat{z}_{i\rightarrow j},~\forall i \in \mathcal{A}_{\rightarrow j}$ are considered in the first iteration ($l=1$). Also, recall that anchors have perfect position information. Their position is not confined by a polygon but rather by a single point (their true location), $\vec{x}_j^{*}$. Due to the positive ranging errors, the position of agent $j$ is somewhere inside the disk with radius $\hat{z}_{i\rightarrow j}$ and center $\mathbf{x}^{*}_i$. Agent $j$'s position can be confined to the convex feasible set 
\begin{equation}
	\mathcal{B}_{ij}= \left\{ \mathbf{x}_j \in \mathbb{R}^2 | \left\|\mathbf{x}_j - \mathbf{x}^{*}_i\right\| \leq \hat{z}_{i \rightarrow j} \right\}.
	\label{eq:anchor_set}
\end{equation}

\begin{algorithm}
  \caption{Polygon Support Outer-Approximation
    \label{alg:POA}}
  \begin{algorithmic}[1]
    \Statex
      \State {given $\vec{z}$}  
			\For {$l=1$ to $N_{FS}$}

				\While {$j=1$ to $N$} {in parallel}
					\If {l=1} {(\textit{first iteration})}
						\State receive $\mathbf{x}^{*}_i, \forall i \in \mathcal{A}_j$
						\For {$i \in \mathcal{A}_{\rightarrow j}$}
							\State $\mathcal{V}_{ij}^{(l)}$ = \textit{anchor polygon proc.}($\mathbf{x}^{*}_i, \hat{z}_{i\rightarrow j}$) 
							\NoNumber{- see. \textbf{Algorithm \ref{alg:anchor_polygon}} -}				
						\EndFor		
						\State $\mathcal{V}_j^{(1)}$ = \textit{polygon intersection}($\mathcal{V}_{ij}^{(l)}~ \forall i \in \mathcal{A}_{\rightarrow j}$) 
						\NoNumber{- see. \textbf{Algorithm \ref{alg:polygon_intersection}} -}											
					\Else { (\textit{subsequent iterations})}
								\State broadcast $\mathcal{V}_j^{(l)}$
					\State receive $\mathcal{V}_i^{(l)},~ \forall i \in \mathcal{S}_{\rightarrow j}$
					\For {$i \in \mathcal{S}_{\rightarrow j}$}	
						\State $\mathcal{V}_{ij}^{(l)}$ = \textit{polygon scaling}($\hat{z}_{i\rightarrow j}$, $\mathcal{V}_i^{(l)}$) 
						\NoNumber{- see. \textbf{Algorithm \ref{alg:polygon_scaling}} -}				
					\EndFor
					\State $\mathcal{V}_j^{(l+1)}$ = \textit{polygon intersection}($\mathcal{V}_{ij}^{(l)}~ \forall i \in \mathcal{S}_{\rightarrow j}$) 
					\NoNumber{- see. \textbf{Algorithm \ref{alg:polygon_intersection}} -}				
					\EndIf
				\EndWhile
      \EndFor
  \end{algorithmic}
\end{algorithm}
To obtain a polygon $\mathcal{V}_{ij}^{(1)}$, this disk is outer-approximated by a polygon with $N_E$ edges. This operation is depicted in Fig. \ref{fig:anchor_polygon}. We call such a polygon: \textit{anchor polygon processing}. Mathematically, we describe a convex polygon with $N_E$ edges by an ordered list of $N_E$ vertices or by the intersection of a set of halfspaces. For now, we stick to the description with vertices. 
To generate these vertices, we first determine a polygon outer-approximation of the disk with radius $\hat{z}_{i\rightarrow j}$ which resides in the origin. This can be done efficiently in polar coordinates. Then, we transform the resulting vertices into Cartesian coordinates and shift the vertices of the polygon by the position of the anchor $\mathbf{x}^{*}_i$ to obtain a polygon outer-approximation of $\mathcal{B}_{ij}$ (see Fig. \ref{fig:anchor_polygon}). 

In more detail, we begin with generating points around the origin with uniform angular spacing $\alpha = \frac{2\pi}{N_E}$ and fixed radius 
\begin{equation}
r_{ij} = \frac{\hat{z}_{i\rightarrow j} }{\cos(\alpha/2)}.
\label{eq:POA_radius}
\end{equation} 
We have a list of vertices in polar coordinates, i.e. $\tilde{\vec{v}}_{ij,1}^{(1)} = [\alpha_0,r_{ij}]^T$, $\tilde{\vec{v}}_{ij,2}^{(1)} = [\alpha+\alpha_0,r_{ij}]^T$, ..., $\tilde{\vec{v}}_{ij,N_E}^{(1)} = [(N_E-1)\alpha+\alpha_0,r_{ij}]^T$, where $\alpha_0$ is a random angular offset. Subsequently, we transform these vertices into Cartesian coordinates and shift them by the position of anchor $i$. Let $\bar{\vec{v}}_{ij,m}^{(1)}$ denote the $m^{\text{th}}$ vertex in Cartesian coordinates, the final polygon is given by 
\begin{equation}
	\vec{v}_{ij,m}^{(1)} = \bar{\vec{v}}_{ij,m}^{(1)} + \mathbf{x}^{*}_i, \quad m=1,...,N_E.
	\label{eq:POA_anchor}
\end{equation}
The ordered list of vertices $\vec{v}_{ij,m}^{(1)}$ constitutes the polygon $\mathcal{V}_{ij}^{(1)}=\left\{\vec{v}_{ij,1}^{(1)},\vec{v}_{ij,2}^{(1)},...,\vec{v}_{ij,N_E}^{(1)}\right\}$. Note that the polygon which is defined by $\mathcal{V}_{ij}^{(1)}$ is always convex. 
\begin{figure}[t]
	\centering	      
	\includegraphics[width=.5\columnwidth]{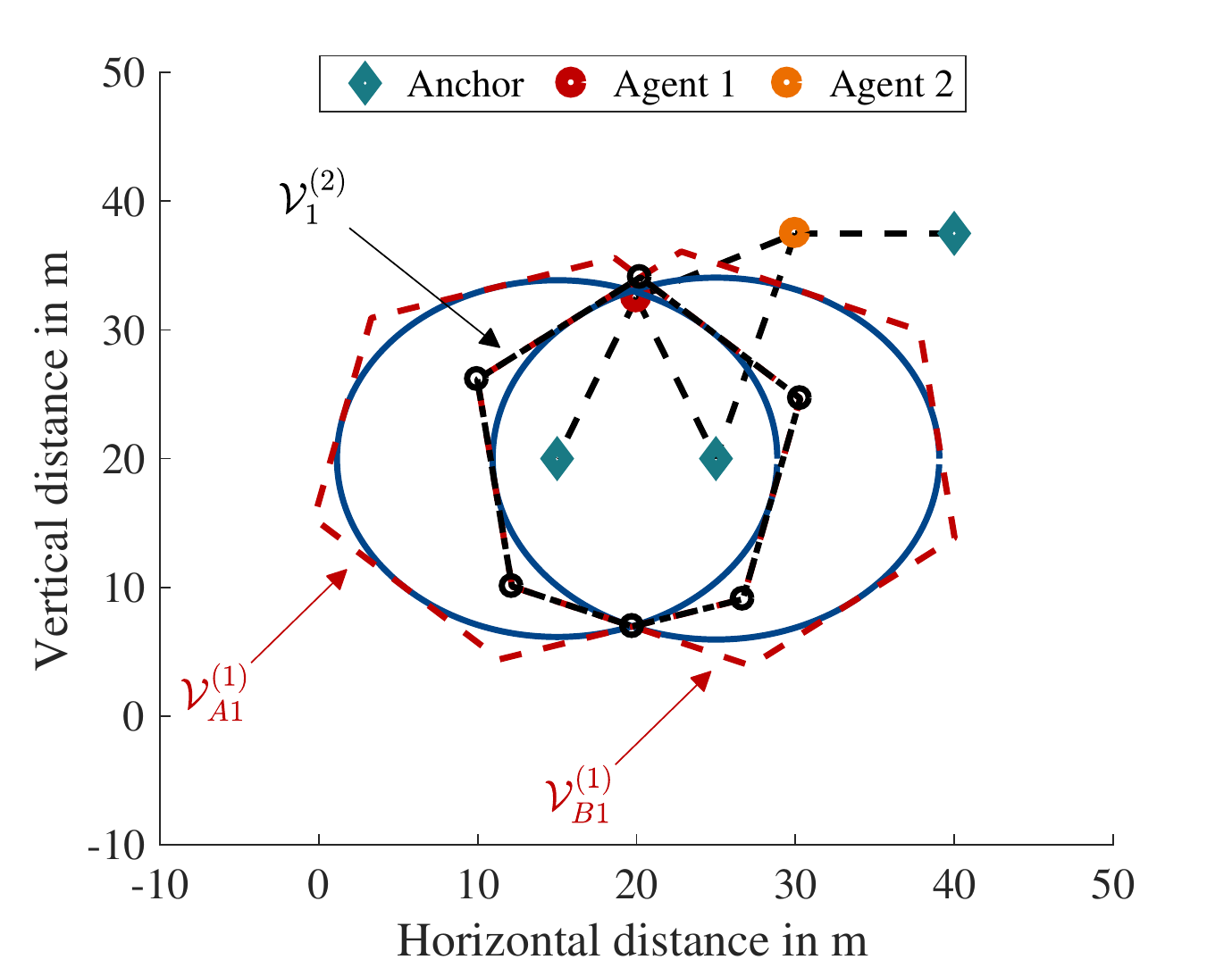}
	\caption{\textit{Anchor polygon} - We outer-approximate the feasible set $\mathcal{B}_{B1}$ by a polygon $\mathcal{V}_{B1}$ with $N_E=6$ edges. The vertices of $\mathcal{V}_{B1}$ are spaced with uniform angle $\alpha = \frac{2 \pi}{N_E} $ around $\mathbf{x}^{*}_B$.}
	\label{fig:anchor_polygon}
\end{figure}
\begin{algorithm}
  \caption{Anchor Polygon Processing
    \label{alg:anchor_polygon}}
  \begin{algorithmic}[1]
    \Statex
      \State {given $\mathbf{x}^{*}_i,\hat{z}_{i\rightarrow j}$}  	
			\State vertex angular spacing $\alpha = \frac{2\pi}{N_E}$
			\State vertex radius $r_{ij} = \frac{\hat{z}_{i\rightarrow j} }{\cos(\alpha/2)}$
			\For {$m=1$ to $N_{E}$}
				\State compute vertex $\vec{v}_{ij,m}^{(1)}=[\alpha_0+m\cdot \alpha,r_{ij}]^T$
				\State $\bar{\vec{v}}_{ij,m}^{(1)} = \textit{pol2cart} (\vec{v}_{ij,m}^{(1)})$
				\State shift vertex $\vec{v}_{ij,m}^{(1)} = \bar{\vec{v}}_{ij,m}^{(1)} + \mathbf{x}^{*}_i$
      \EndFor
			\State determine polygon $\mathcal{V}_{ij}^{(1)}=\left\{\vec{v}_{ij,1}^{(1)},\vec{v}_{ij,2}^{(1)},...,\vec{v}_{ij,N_E}^{(1)}\right\}$
  \end{algorithmic}
\end{algorithm}
When agent $j$ has obtained all anchor polygons $\mathcal{V}_{ij}^{(1)}, ~\forall i \in \mathcal{A}_j$, it intersects these polygons to obtain $\mathcal{V}_{j}^{(1)}$. The intersection of two polygons can be determined efficiently using the Sutherland-Hodgman algorithm \cite{SH1974}. We employ the Sutherland-Hodgman algorithm to intersect pairs of polygons. The Sutherland-Hodgman algorithm selects a convex clipping polygon and a subject polygon; the choice is arbitrary if both polygons are convex. All vertices of the subject polygon are added to an input list. The vertices of this list are updated as the edges of the clipping polygon are considered subsequently. Suppose that the polygon given by $\mathcal{V}_{ij}^{(l)}$ is the clipping polygon, and it is given by a set of vertices ($\vec{v}_{ij,m}^{(l)}, m=1,...,N_E$) or by the intersection of the halfspaces

\begin{equation}
	\left\{ \vec{x}_j | \mathbf{a}_{\mathcal{V}_{ij}^{(l)},m}^T \vec{x}_j \leq c_{ij,m}^{(l)}\right\} \quad m=1,...,N_E,
\label{eq:halfspace_polygon}
\end{equation}
with outward normal vector $\mathbf{a}_{\mathcal{V}_{ij}^{(l)},m} \bot (\vec{v}_{ij,m}^{(l)}-\vec{v}_{ij,m+1}^{(l)})$ and $c_{ij,m}^{(l)}=\mathbf{a}_{\mathcal{V}_{ij}^{(l)},m}^T \vec{v}_{ij,m}^{(l)}$. If we consider the halfspaces instead of the edges, we can update the vertices of the input list. We start with an arbitrary halfspace and all vertices inside that halfspace, i.e. all vertices that reside inside the halfspace are added to an output list. Vertices outside of the halfspace are not added. Subsequently, the subject polygon is traversed. New vertices are added to the output list if the subject polygon intersects with the hyperplane $\left\{ \vec{x}_j | \mathbf{a}_{\mathcal{V}_{ij}^{(l)},i}^T \vec{x}_j = c_{ij,m}^{(l)}\right\}$. After the entire subject polygon has been traversed, the next halfspace of the clipping polygon is considered. The output list of the previous halfspace constitutes the input list to the next halfspace. After every halfspace of the clipping polygon has been considered, the Sutherland-Hodgman algorithm terminates and the resulting output list yields the intersection of the subject and clipping polygon. Note that the number of vertices of the resulting polygon $\mathcal{V}_{j}^{(l+1)}$ can differ from the original number of vertices $N_E$. An example of \textit{polygon intersection} is depicted in Fig. \ref{fig:anchor_polygon_intersection}. The pseudo-code for polygon intersection is given in Algorithm (\ref{alg:polygon_intersection}). 

\textit{Remark}: the polygon intersection operation is identical in every iteration. However, in the first iteration only anchor polygons, $\mathcal{V}_{ij}^{(1)}~ \forall i \in \mathcal{A}_{\rightarrow j}$, are considered, while polygons of anchors and others agents, $\mathcal{V}_{ij}^{(1)}~ \forall i \in \mathcal{S}_{\rightarrow j}$, are considered for subsequent iterations.

\begin{algorithm}
  \caption{Polygon Intersection
    \label{alg:polygon_intersection}}
  \begin{algorithmic}[1]
    \Statex
      \State {given $\mathcal{V}_{ij}^{(l)}~ \forall i \in \mathcal{S}_{\rightarrow j}$}  
			\State select index of the first neighbor $i=\mathcal{S}_{\rightarrow j}(1)$
			\State initialize $\mathcal{V}_{j}^{(l+1)}=\mathcal{V}_{ij}^{(l)}$
			\For {$k=2$ to  $\left|\mathcal{S}_{\rightarrow j}\right|$}
					\State select index of the next neighbor
					\NoNumber{ $i=\mathcal{S}_{\rightarrow j}(k)$}					
					\State intersect two polygons according to \cite{SH1974}
					\NoNumber{$\mathcal{V}_{j}^{(l+1)}=$ \textit{Sutherland-Hodgman}($\mathcal{V}_{j}^{(l+1)},\mathcal{V}_{ij}^{(l)}$)}				
      \EndFor
  \end{algorithmic}
\end{algorithm}

\begin{figure}[t]
	\centering	      
	\includegraphics[width=.5\columnwidth]{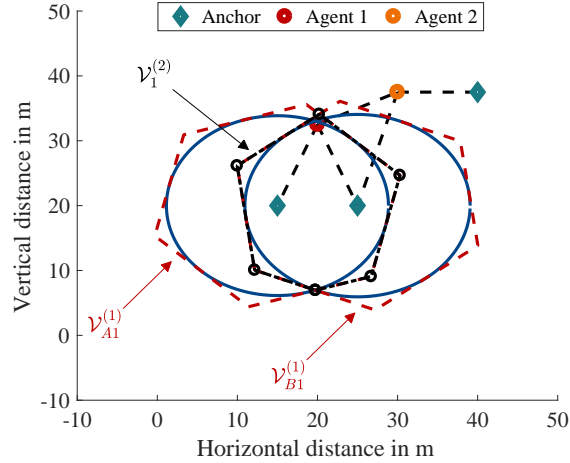}
	\caption{\textit{Polygon intersection }- The intersection of the polygons $\mathcal{V}_{A1}^{(1)}$ and $\mathcal{V}_{B1}^{(1)}$ yields the polygon, $\mathcal{V}_{1}^{(1)}$, which tightly outer-approximates the feasible set (intersection of the two circles). Recall that in the first iteration only anchor polygons are considered. Note that increasing the number of polygon edges, $N_E$, tightens the outer-approximation.}
	\label{fig:anchor_polygon_intersection}
\end{figure}

From the second iteration onward, neighboring agents are also considered. In contrast to an anchor, agent $i$ cannot confine its location to an exact position. However, the position of agent $i$ is confined to the polygon $\mathcal{V}_i^{(l)}, l> 1$. Considering the range estimate to agent $j$, $\hat{z}_{i \rightarrow j}$, the polygon of agent $i$, $\mathcal{V}_i^{(l)}$, has to be extended by the range estimate. To ensure that agent $j$ is inside the polygon $\mathcal{V}_{ij}^{(l)}$, all edges of the polygon $\mathcal{V}_i^{(l)}$ have to be shifted by $\hat{z}_{i \rightarrow j}$ in the direction of the outward pointing normal vector of the respective edge. Recall that convex polygons can be represented by halfspaces according to (\ref{eq:halfspace_polygon}). The scaled polygon is given by the intersection of the halfspaces which are shifted by $\hat{z}_{i \rightarrow j}$ toward the outward normal vector. Wxe obtain these halfspaces by manipulating the right-hand side of all inequalities in (\ref{eq:halfspace_polygon}). Since $c_{i,m}^{(l)}=\mathbf{a}_{\mathcal{V}_{i}^{(l)},m}^T \vec{v}_{i,m}^{(l)}$, where $\vec{v}_{i,m}^{(l)}$ is any point on the corresponding hyperplane, we obtain the shifted halfspace by shifting $\vec{v}_{i,m}^{(l)}$ to 
\begin{equation}
	\vec{\bar{v}}_{i,m}^{(l)}=\hat{z}_{i\rightarrow j} \cdot \vec{a}_{\mathcal{V}_i^{(l)},m} + \vec{v}_{i,m}^{(l)}.
\label{eq:shifted_point}
\end{equation}
The point in (\ref{eq:shifted_point}) yields a point on the $m^{\text{th}}$ shifted hyperlane. Hence the $m^{\text{th}}$ shifted halfspace is given by
\begin{equation}
    	\left\{ \vec{x}_j | \mathbf{a}_{\mathcal{V}_{i}^{(l)},m}^T \vec{x}_j \leq \bar{c}_{i,m}^{(l)}=\mathbf{a}_{\mathcal{V}_{i}^{(l)},m}^T\vec{\bar{v}}_{i,m}^{(l)} \right\}.
\label{eq:shifted_halfspace}
\end{equation}
The scaled polygon is fully described by the set of shifted halfspaces. In order to obtain a list of vertices of the polygon, we have to determine the intersections of adjacent hyperplanes, i.e. for the $m^{\text{th}}$ hyperplane we compute the intersection with the $(m-1)^{\text{th}}$ and $(m+1)^{\text{th}}$ hyperplane. The resulting points constitute the vertices of the scaled polygon $\mathcal{V}_{ij}^{(l)}$. Algorithm \ref{alg:polygon_scaling} shows the pseudo-code for polygon scaling. An illustrative example is shown in Fig. \ref{fig:polygon_intersection_scaled}.

\begin{algorithm}
  \caption{Polygon Scaling
    \label{alg:polygon_scaling}}
  \begin{algorithmic}[1]
    \Statex
      \State {given $\mathcal{V}_{i}^{(l)}, \hat{z}_{i \rightarrow j}$}  
			\For {$m=1$ to  $N_E$}
					\State shift hyperplane according to (\ref{eq:shifted_point}) 
					\NoNumber{$\vec{\bar{v}}_{i,m}^{(l)}=\hat{z}_{i\rightarrow j} \cdot \vec{a}_{\mathcal{V}_i^{(l)},m} + \vec{v}_{i,m}^{(l)}$ }					
					\State determine shifted halfspace according to (\ref{eq:shifted_halfspace})
					\NoNumber{$\left\{ \vec{x}_j | \mathbf{a}_{\mathcal{V}_{i}^{(l)},m}^T \vec{x}_j \leq \bar{c}_{i,m}^{(l)}=\mathbf{a}_{\mathcal{V}_{i}^{(l)},m}^T\vec{\bar{v}}_{i,m}^{(l)} \right\}$}	
      \EndFor
			\State determine intersection of adjacent hyperplanes
			\NoNumber{$\left\{ \vec{x}_j | \mathbf{a}_{\mathcal{V}_{i}^{(l)},m}^T \vec{x}_j = \bar{c}_{i,m}^{(l)} \right\} \cap \left\{ \vec{x}_j | \mathbf{a}_{\mathcal{V}_{i}^{(l)},m+1}^T \vec{x}_j = \bar{c}_{i,m+1}^{(l)} \right\}$}
			\NoNumber{ $\rightarrow \vec{v}_{ij,m}^{(l)},~ \text{for }m=1,...,N_E$}
			\State determine scaled polygon $\mathcal{V}_ij^{(l)}=\left\{\vec{v}_{ij,1}^{(l)},...,\vec{v}_{ij,N_E}^{(l)}]^T\right\}$
  \end{algorithmic}
\end{algorithm}

\begin{figure}[t]
	\centering	      
	\includegraphics[width=.5\columnwidth]{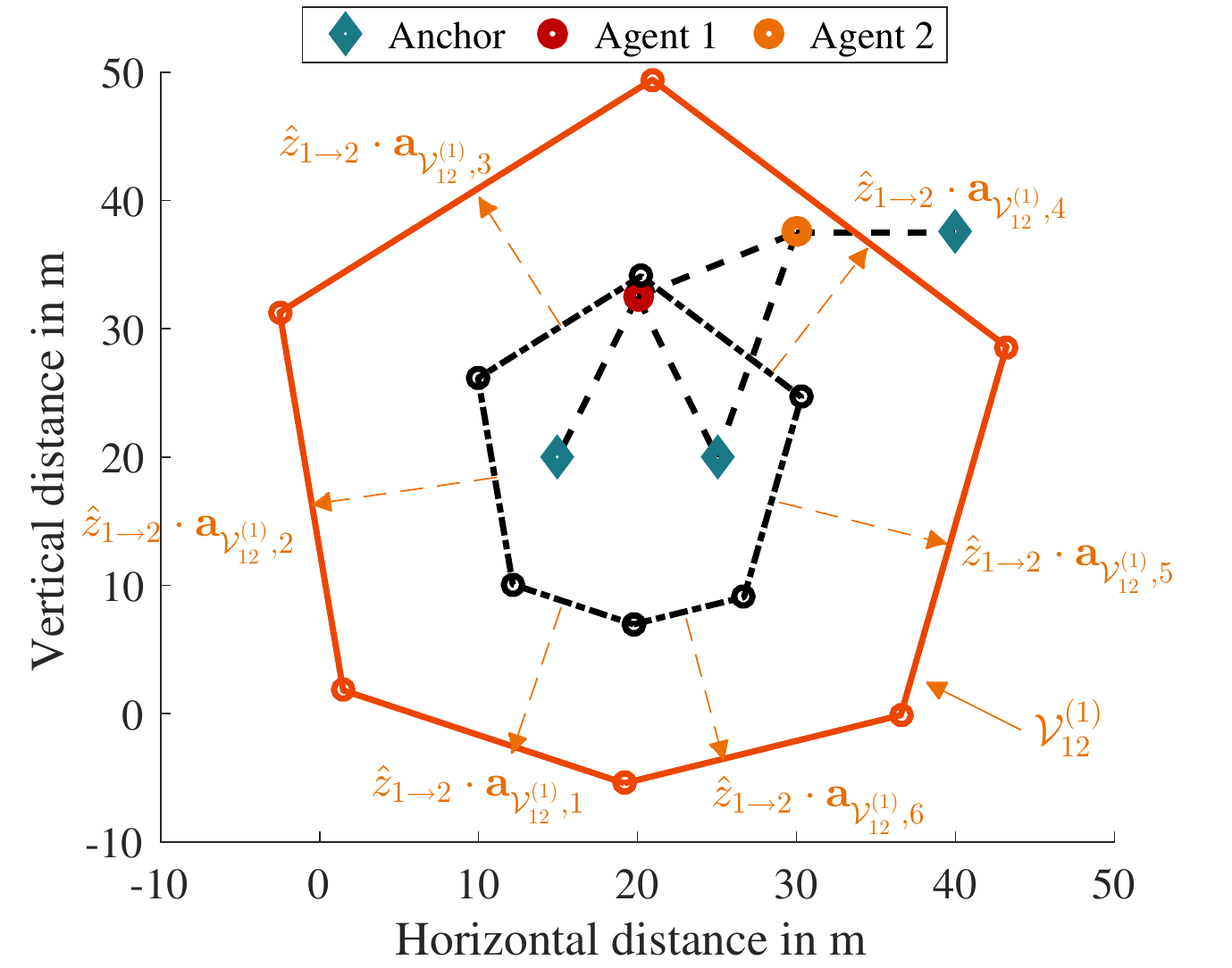}
	\caption{\textit{Polygon scaling }- The resulting polygon of agent $1$ after the first iteration, $\mathcal{V}_{1}^{(1)}$, and its scaled version, $\mathcal{V}_{12}^{(1)}$, are depicted. The original polygon, $\mathcal{V}_{1}^{(1)}$, is given by $N_E=6$ halfspaces, with outward normal vector $\mathbf{a}_{\mathcal{V}_{12}^{(1)},i}, i=1,...,6.$ To scale the original polygon, each halfspace is shifted by $\hat{z}_{12}$ in the direction of its normal vector. The intersections of adjacent halfspaces yield the vertices of the scaled polygon.}
	\label{fig:polygon_intersection_scaled}
\end{figure}

Polygon support outer-approximation (Algorithm \ref{alg:POA}) is readily implemented in a distributed manner, with negligible communication overhead. In each iteration, every agent has to broadcast the vertices of its polygon once. After $N_{FS}$ iterations, every agent has obtained a polygon that outer-approximates the support of its marginal a posteriori distribution. Therefore, also the positions of nodes are confined.

Note that the proposed polygon outer-approximation method does not fail, as long as the assumption in \eqref{eq:distance_ineq} is true, i.e. whenever the range estimates over-estimate the true distance, our Algorithm \ref{alg:POA} produces polygons which are guaranteed to contain the positions of the respective agents. If the assumption in \eqref{eq:distance_ineq} is violated, there are no guarantees that the supports of the marginal distributions are compact and convex. Hence there is no assurance that a polygon exists which contains the true location of the corresponding agent. As discussed in subsection \ref{subsub:meas_model}, the existence of negative range error is extremely unlikely. If a range estimate with negative ranging error occurs, discarding it from Algorithm \ref{alg:POA} will ensure the success of the algorithm. 

\textit{Remark}: Note that the polygon boundaries provide hard-decisions regarding the set of possible location, which might unusual in the context of stochastic inference. However, the one-sided positive measurement errors assign zero-probability to all regions outside of the feasible set. Our polygon outer-approximation approach concentrates all particles inside the feasible set, i.e. the region where the true a posteriori distribution has non-zero probability mass. If any other proposal distribution generates particles outside of the feasible set, the resulting weight of that particle would be computed to be zero, and the particle would not contribute in the estimation process. In other words, the hard-decisions are imposed by the model, not by our polygon-based approach. 

In the following subsection, we describe how we leverage these polygonal constraints to obtain the constrained proposal distributions. 

\subsection{Novel Proposal Distribution}
After $N_{FS}$ iterations of polygon support outer-approximation, every agent $j=1,...,N$ has obtained a polygon $\mathcal{V}_{j}^{(N_{FS})}$ which outer-approximates the support of its marginal a posteriori distribution, $p_{\vec{X}_j|\vec{Z}}(\vec{x}_j|\vec{z})$. Hence we choose these polygons to determine the supports of our proposal distributions $q_{\mathbf{X}_j}(\mathbf{x}_j), \forall j$.  We draw samples $\{\mathbf{x}_j^{(k)} \}_{k=1}^{N_s}$ uniformly over the polygon. Thus, the proposal distribution is given by
\begin{equation}
		q_{\mathbf{X}_j}(\mathbf{x}_j) =
\begin{cases}
\frac{1}{A_{p,j}^{(N_{FS})}}, & \mathbf{x}_j \in \mathcal{V}_{j}^{(N_{FS})} \\
0, &  \text{otherwise},
\end{cases}
	\label{eq:sampling_distribution}
\end{equation} 
where $A_{p,j}^{(N_{FS})}$ is the area of the polygon of the $j^{\text{th}}$ agent.

In order to draw samples uniformly inside a polygon, we use acceptance and rejection sampling. We draw samples uniformly from a rectangle which comprises the polygon $\mathcal{V}_{j}^{(N_{FS})}$. Samples are drawn in horizontal and vertical direction independently, i.e. the edges of the rectangle are aligned with the horizontal and vertical axis. A sample $\mathbf{x}_j^{(k)}$ is accepted if $ \mathbf{x}_j^{(k)} \in \mathcal{V}_{j}^{(N_{FS})}$. The area of the rectangle should be as small as possible in order to achieve the highest acceptance rate. The acceptance rate is given by the ratio of the areas of the polygon and rectangle
\begin{equation}
R_a = \frac{A_{p,j}^{(N_{FS})}}{A_{r,j}^{(N_{FS})}},
\label{eq:acceptance_rate}
\end{equation}
where $A_{r,j}^{(N_{FS})}$ is the area of the rectangle. Consequently, $N_s/R_a$ have to be drawn on average in order to determine obtain $N_s$ accepted samples.

\begin{figure}[t]
	\centering	      
	\includegraphics[width=.5\columnwidth]{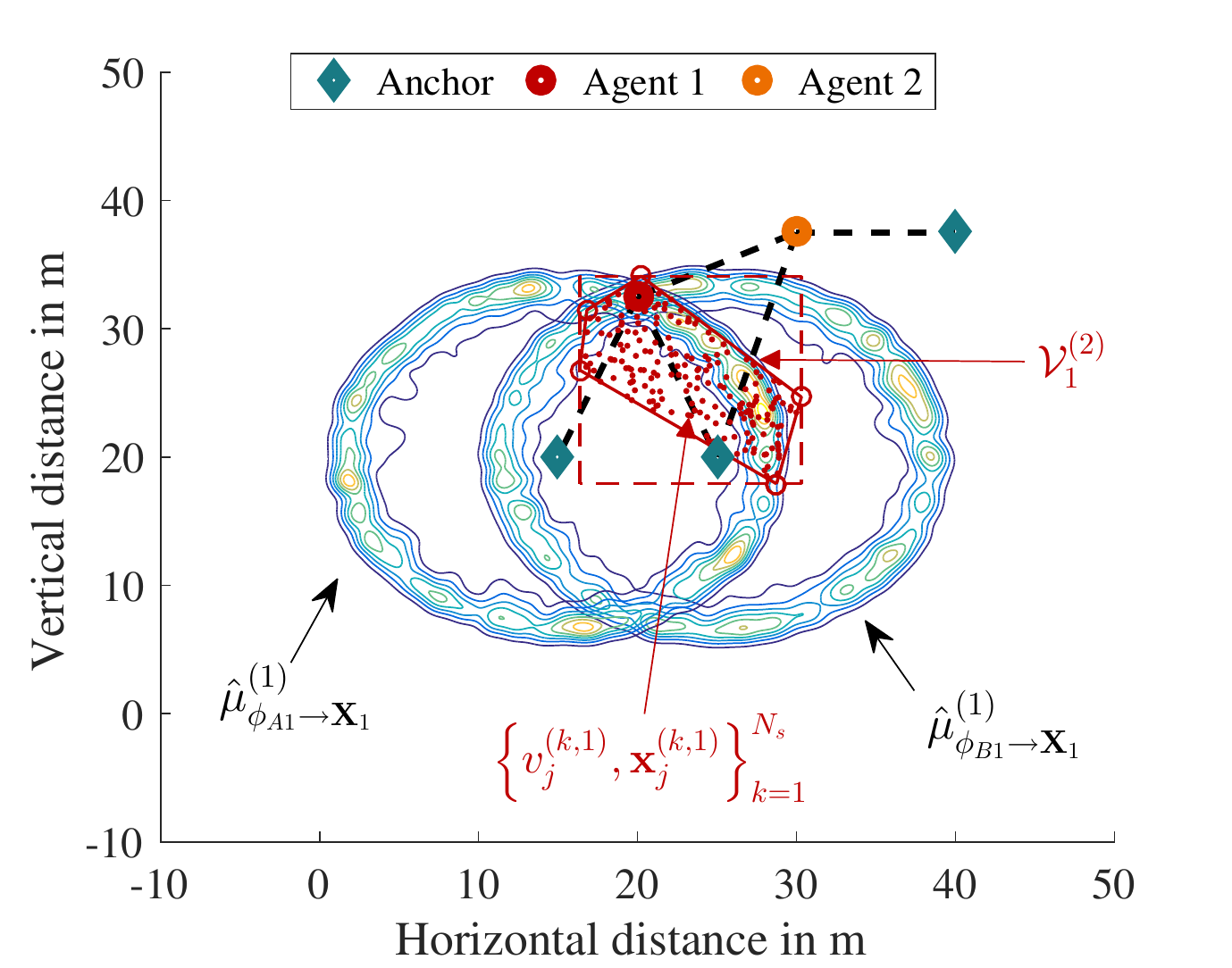}
	\caption{\textit{Novel proposal distribution} - Samples $\{\mathbf{x}_1^{(k)} \}_{k=1}^{N_s}$ drawn according to our novel proposal distribution $q_{\mathbf{X}_1}(\mathbf{x}_1)$. With our proposal distribution, we can confine the region from which we draw samples to the relevant region close to the true location.}
	\label{fig:sampling_distribution}
\end{figure}

Fig. \ref{fig:sampling_distribution} provides an example of samples which are drawn from the our proposal distribution. The figure shows the polygon of agent $1$, $\mathcal{V}_{1}^{(N_{FS})}$, the rectangle for acceptance-and-rejection sampling, the samples, $\left\{ \mathbf{x}_1^{(k)} \right\}_{k=1}^{N_s}$, drawn (red dots), and the kernel density estimates of NBP messages from anchor A and B (contour plots).

\section{Simulation Results}
\label{section_5}
This section contains the numerical evaluation of our novel proposal distribution. In order to determine meaningful parameters for computing our proposal distribution (number of polygon edges $N_E$ and number of iterations $N_{FS}$), we first investigate the polygon support outer-approximation algorithm while disregarding NBP. Following this, we jointly investigate NBP using the polygon-shaped proposal distributions. 
\subsection{Simulation Setup and Performance Measures}
As reference topology, we use a common topology from literature \cite{VWS2012,WLW2009,LFSWW2012}. We simulate a large-scale ultra-wideband network in a 100m x 100m plane, with 100 uniformly distributed agents and 13 fixed anchors. We assume a circular communication range of 20 meters. The ranging errors are distributed according to (\ref{eq:exp_dist}) with a mean of $1/\lambda=0.38m$. Note that in practice the distribution of the ranging errors and its stochastic moments depend on the SNR, the environment, the ranging algorithm itself, and many other aspects.

We quantify localization performance using the average localization error and the outage probability. An agent is said to be in outage if its localization error $e$ exceeds an error threshold $e_{\text{th}}$. We compute the localization error according to $e= \left\|\hat{\mathbf{x}}_j -\mathbf{x}_j \right\|$, where $\hat{\mathbf{x}}_j$ is the estimated location of node $j$, taken as the minimum mean square error (MMSE) estimate of the belief. For statistical significance, we consider 200 random network topologies (positions of agents vary randomly) and collect position estimates at every iteration for every agent. 
\begin{figure}[t]
	\centering	      
	\includegraphics[width=.5\columnwidth]{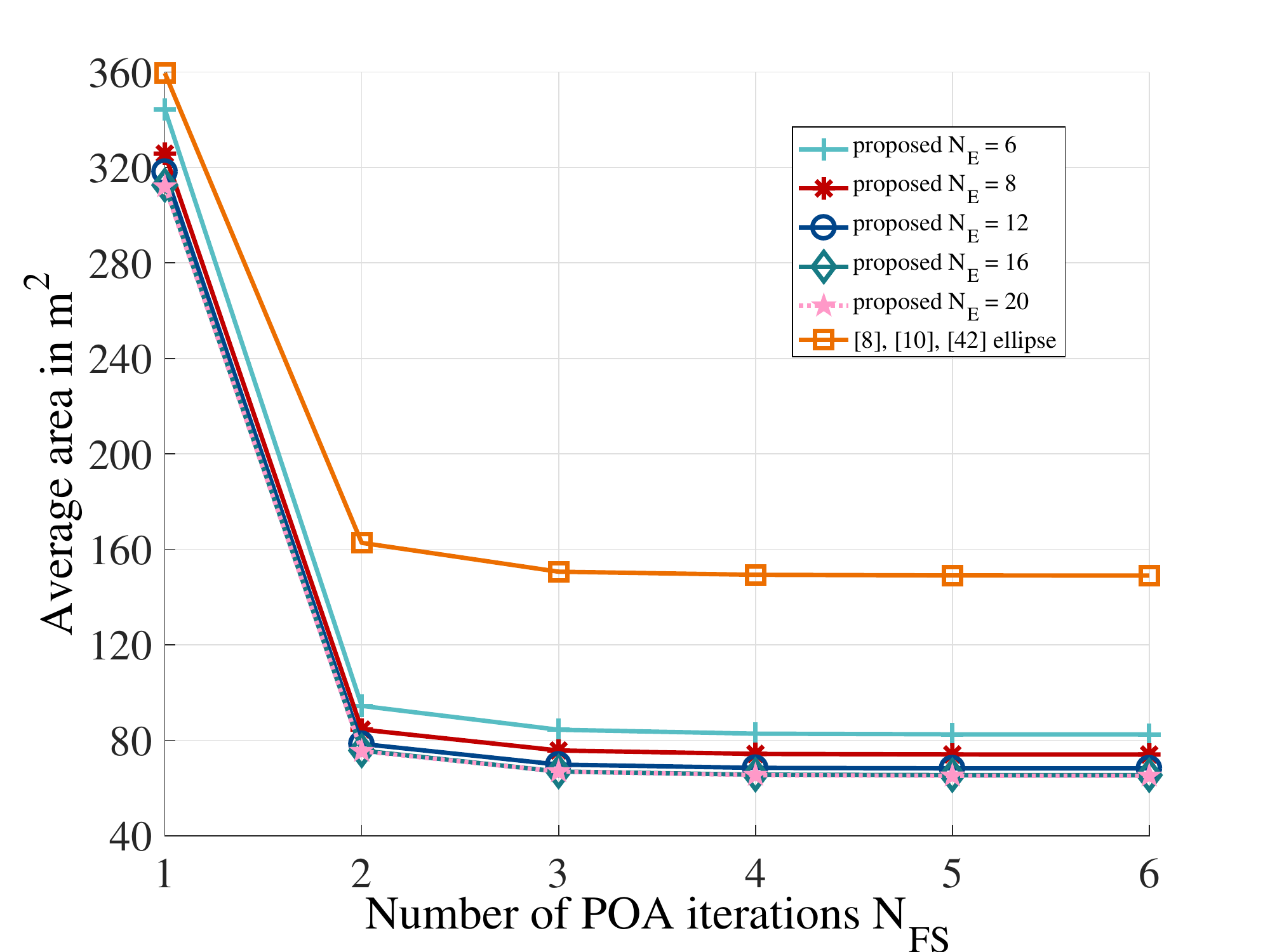}
	\caption{\textit{Average polygon/ellipse area versus the number of iterations} - It becomes evident that after a few iterations (2-3) the polygon/ellipse area does not reduce significantly. Moreover, a small number of polygon edges $N_E \approx 16$ is sufficient to achieve tight polygons. }
	\label{fig:polygon_area_convergence_multiple_Ns}
\end{figure}

In order to assess the complexity NBP localization, we conduct two analyses. First, we consider the number of operations required for each variant of the algorithm. Then, we measure the computation time which is required. All computations of the simulation are performed on an Intel i7-5820k desktop CPU which was exclusively dedicated to simulation. Three different aspects of the computation time are considered, namely the time required to determine the outer-approximating polygon, the time required to determine converged location estimates, and the accumulated time. Note that the time which is required to determine converged location estimates depends on the number of iterations required to achieve convergence\footnote{For us, convergence is achieved if the average localization error does not change notably for successive iterations.}. 

\subsection{Polygon Support Outer-Approximation}
\label{subsec:POA}
We first explore the influence of the number of polygon vertices on the area of the polygon. For comparability, we also analyze the ellipse outer-approximation algorithm which was presented in \cite{GWG2013,GWSR2011,GHO2011}. Fig. \ref{fig:polygon_area_convergence_multiple_Ns} depicts the polygon area in $\text{m}^2$ against the number of iterations. Recall that it is desirable to have polygons of small size in order to tightly constrain the positions of agents. We can obtain from Fig. \ref{fig:polygon_area_convergence_multiple_Ns} that the polygon size reduces, as the number of vertices increases. Increasing the number of vertices above $N_E>16$ adds no additional area reduction. Thus, we restrict ourselves to polygons with $N_E = 16$ vertices for all further analysis. In addition, we see that the largest area reduction is achieved in the $2^{\text{nd}}$ iteration. The reduction of polygon area from the $2^{\text{nd}}$ to the $3^{\text{rd}}$ iteration is only minor. Using more than $3$ iterations does not decrease the polygon area notably. To determine our proposal distributions as cheap as possible in terms of computation, we restrict ourselves to $N_{FS}=2$ iteration for all further analysis. 
\begin{figure}[t]
	\centering	      
	\includegraphics[width=.5\columnwidth]{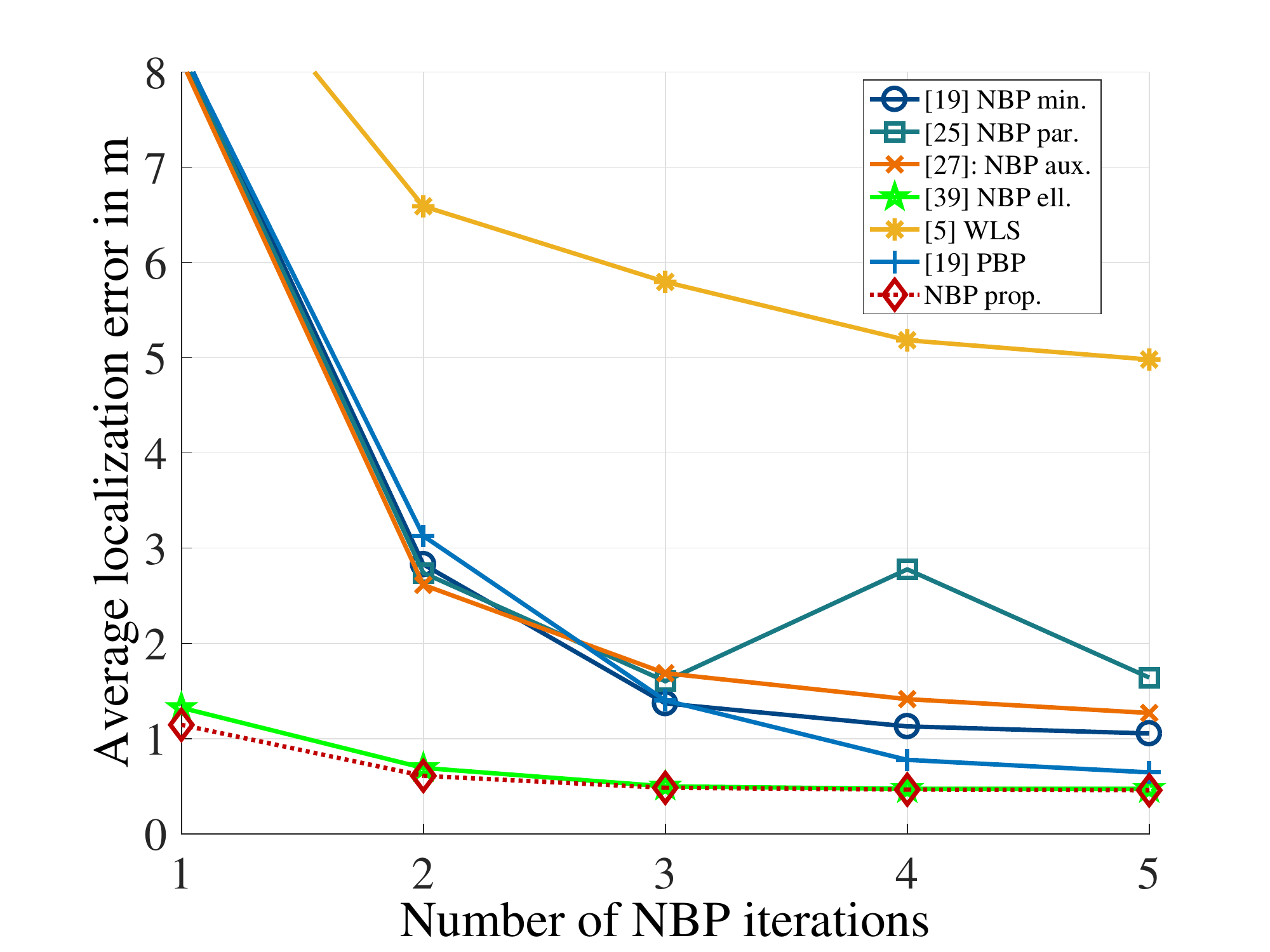}
	\caption{\textit{Average localization error} - The average localization error is depicted against the number of iterations. Our proposed NBP variant converges significantly faster.}
	\label{fig:average_localization_error_vs_niter}
\end{figure}
\subsection{Novel Proposal Distribution for Belief Propagation}
We now consider the impact of our novel proposal distribution on the localization accuracy when using nonparametric belief propagation. We consider the following reference sampling techniques: 1) \cite{IFMW2005} parsimonious sampling ('NBP par.'), 2) \cite{SZ2013} sampling based on an auxiliary variable ('NBP aux.'), 3) \cite{LFSWW2012} sampling based on the incoming message with the lowest entropy ('NBP min.'), and 4) \cite{MB2016} sampling with elliptical constraints on the sample space ('NBP ell.'). For better comparability, we also consider the weighted least squares ('WLS') approach from \cite{SLK2002} and the parametric belief propagation ('PBP') method from \cite{LFSWW2012}. We analyze the the speed of convergence, localization accuracy, the computation time, and the number of samples. 
\subsubsection{Convergence and Accuracy} 
Fig. \ref{fig:average_localization_error_vs_niter} depicts the average localization error against the number of iterations. The localization error decreases in every iteration until convergence is reached. From Fig. \ref{fig:average_localization_error_vs_niter}, we can infer two benefits of our proposal distribution. First, convergence is achieved quicker compared to the baseline approaches. Secondly, our approach achieves the highest localization accuracy among all considered algorithms. Hence there is the two-fold benefit of incorporating our proposal distribution. We can trace the previous two observations back to the following reason. Quick convergences is achieved since the polygonal constraints already restrict the possible locations, and the beliefs are concentrated in the areas close to the true locations from iteration one. High localization accuracy is achieved since samples are reside in the areas close to the true location, and, unlike in \cite{IFMW2005,SZ2013,LFSWW2012}, almost all particles contribute to the location estimate.

The gain in terms of localization accuracy becomes even more evident, when the outage probability after convergence\footnote{Most baseline approaches (\cite{IFMW2005} 'NBP par.', \cite{SZ2013} 'NBP aux.', \cite{LFSWW2012} 'NBP min.', and \cite{LFSWW2012} PBP) need 5 iterations to converge, while \cite{MB2016} 'NBP ell.' and \cite{SLK2002} 'WLS' need 2 and 10 iterations, respectively.} in Fig. \ref{fig:outage_probability} is considered. With our proposed NBP variant, the outage probability decreases rapidly in the regime of small errors, and especially, large errors can be mitigated better compared to the baseline approaches. Our proposal greatly outperforms all considered variants of NBP. Parsimonious sampling from \cite{IFMW2005}, shows poor performance in the regime of larger errors, which can be traced back to the fact that samples are drawn based on the belief of the previous iteration. When the previous belief was erroneous, the current belief will be impaired by the previous belief. This also explains the non-monotonic decrease of the average localization error in Fig. \ref{fig:average_localization_error_vs_niter}. The variant from \cite{SZ2013}, which is based on an auxiliary variable, considers only information from anchors to draw samples. Due to the sparseness of the anchors in the considered scenario, the samples are not concentrated tightly in the area of the true location. Thus, no considerable advantage can be seen in terms of accuracy, compared to the non-constrained sampling  approach from \cite{LFSWW2012}. The PBP variant provides the most accurate results among the benchmark schemes.
\begin{figure}[t]
	\centering	      
	\includegraphics[width=.5\columnwidth]{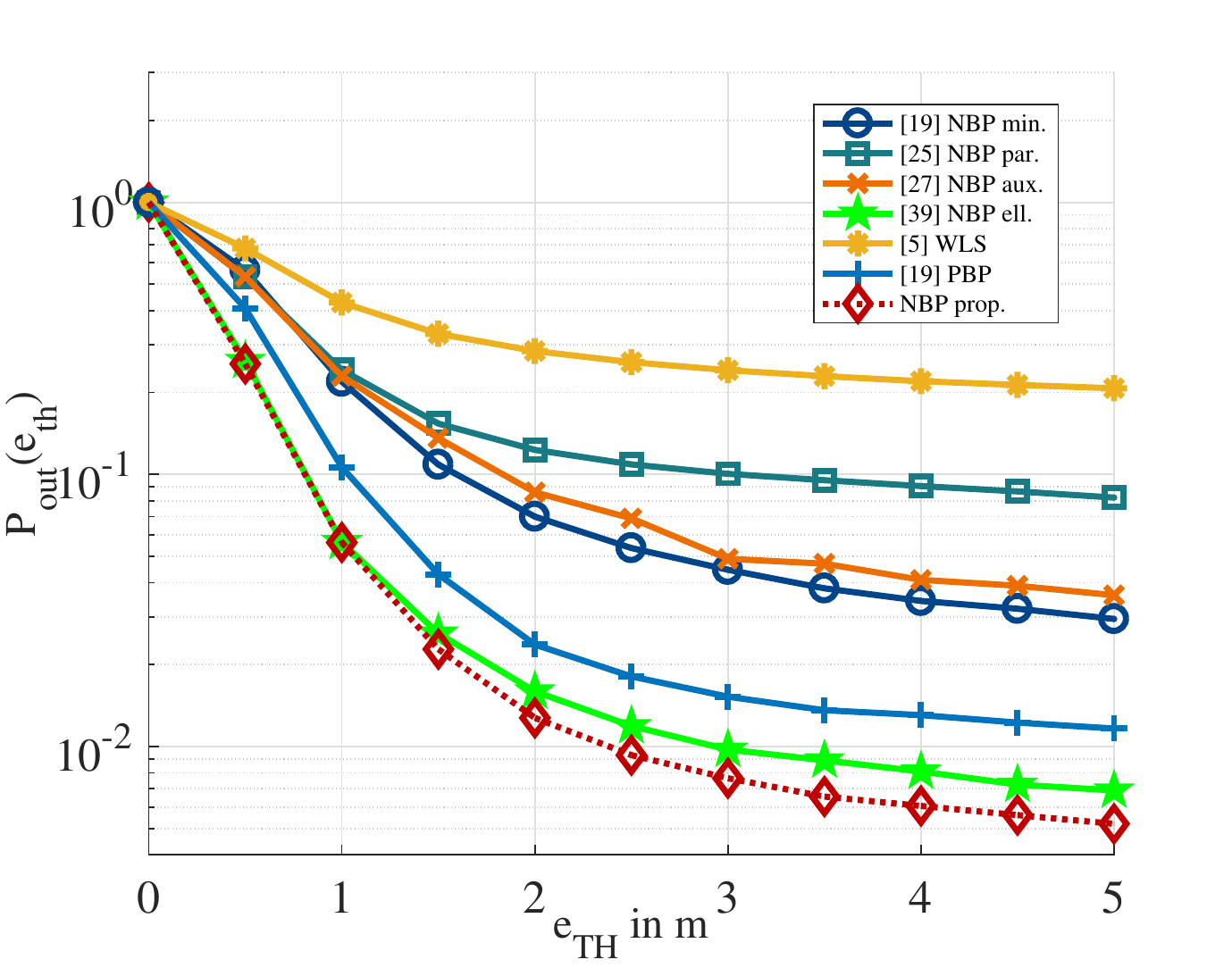}
	\caption{\textit{Outage probability compared to baseline algorithms} - Our proposed NBP variant outperforms all benchmarks in terms of localization accuracy.}
	\label{fig:outage_probability}
\end{figure}
\subsubsection{Computation Time}
Table \ref{tab:computation_time} depicts the average computation time $t_{\mathrm{c}}$ per agent. We break the accumulated computation time up into the time that is required to compute the polygons $t_{\mathrm{poly}}$ and the time required to achieve convergence with the respective localization algorithm $t_{\mathrm{conv}}$, i.e. $t_{\mathrm{c}}=t_{\mathrm{poly}}+t_{\mathrm{conv}}$. For the latter time, we use the convergence observations from the previous discussion.  
\begin{table*}%
\centering
\begin{tabular}{|c|c|c|c|c|c|c|c|}
\hline
\multirow{2}{*}{}& \multicolumn{7}{c|}{Algorithms} \\
& \cite{LFSWW2012} NBP  & \cite{SZ2013} NBP aux. & \cite{IFMW2005} NBP  par. & \cite{MB2016} NBP ell. &\cite{LFSWW2012} PBP & \cite{SLK2002} WLS & NBP prop. \\ \hline
\multicolumn{1}{|l|}{$t_{\mathrm{poly}}$ in s}& - & - 			& - 			& - 	&-	 & -	& 0.0019	\\ \hline
\multicolumn{1}{|l|}{$t_{\mathrm{conv}}$ in s}& 16.7189 	& 16.8082 & 19.0205 & 7.4425 & 0.6253 &  0.0618	& 7.3275 \\ \specialrule{.1em}{.05em}{.05em}

\multicolumn{1}{|l|}{$t_{\mathrm{c}}$ in s} &  16.7189	& 16.8082 & 19.0205 & 7.4425 & 0.6253 & 	0.0618	& 7.3294 \\ \hline
\end{tabular}
\begin{tabular}{c}
\end{tabular}
\caption{\textit{Computation time} - Average time per node to achieve convergence.}
\label{tab:computation_time}
\end{table*}
In terms of computation time, WLS shows the lowest cost followed by PBP and our proposed polygon-based NBP. We can infer two important conclusions
\begin{itemize}
	\item Computation time for polygon support outer-approximation is almost negligible 
	\item The increased speed of convergence reduces computation time significantly compared to the baseline NBP approaches
\end{itemize}
These observations meet our expectations on the computation time. To gain some more insight into the first observation, let us review polygon support outer-approximation. Note that all operations in that algorithm can be solved in closed form. Polygon scaling and anchor polygon processing (Algorithm \ref{alg:anchor_polygon} and \ref{alg:polygon_scaling}, respectively) both scale linear with the number of edges, i.e. $\mathcal{O}(N_E)$. In terms of computation, the most demanding part of the algorithm is polygon intersection (Algorithm \ref{alg:polygon_intersection}). The Sutherland-Hodgman algorithm scales quadratic in the number of edges, $N_E$ \cite{SH1974}. Since it intersects only pairs of polygons, the intersection of $|\mathcal{S}_{\rightarrow j}|$ polygons requires to execute the Sutherland-Hodgman algorithm $|\mathcal{S}_{\rightarrow j}|$ times. Recall that the number of edges of intersected polygons depends on the input polygons and it cannot be generalized. Thus, we cannot quantify the number of operations required to intersect $|\mathcal{S}_{\rightarrow j}|$ polygons in general. We observed that the number edges of the two intersecting polygons $\tilde{N}_E$ is typically less than the number of initial edges $N_E$. Hence $\tilde{N}_E \approx N_E$ over-estimates the number of computations. With this assumption, polygon intersection scales according to $\mathcal{O}((|\mathcal{S}_{\rightarrow j}|)N_E^2)$. Considering that $N_E\ll N_S$, it is obvious that NBP is much more costly in terms of computation than polygon outer-approximation. Observation 2) makes intuitively sense. If we consider 2), it becomes evident that reducing the number of iterations also reduces the complexity of NBP linearly. We see that our polygon-based NBP results in a considerable reduced computation time compared the baseline variants of NBP. In particular, a reduction of approximately $60\%$ is achieved. Yet, our proposed NBP variant has somewhat larger computational requirements (approximately a factor of $11$), when compared to the parametric approach. It should be noted, however, that the localization accuracy is generally higher with our proposal. This will become more evident in the following subsection. 

\subsubsection{Number of Samples}
Fig. \ref{fig:number_of_particles} depicts the outage probability after convergence considering different numbers of samples.  Here, we only consider PBP since it is the strongest competitor in terms of localization accuracy. Two observations can be made: 1) the number of samples should be sufficiently large in order to outperform PBP, and 2) when the number of samples grows, larger gains can be achieved compared the parametric approach. For large samples sizes ($N_{\mathrm{s}}\geq 1000$) significant gains can be achieved in the regimes of both small and large error. Yet, larger samples sizes result in higher computation times. Considering the results from the previous subsection, we can draw the following conclusion regarding the accuracy-computation trade-off. For systems which do not aim to maximize the localization accuracy, it may be sufficient to choose PBP for network localization, since it is generally cheaper in terms of computations. When more computational resources are available, however, NBP allows for a significant increase in terms of localization accuracy far beyond what is possible with PBP. Compared to other NBP variants, our polygon-based NBP provides a considerably improved accuracy-computation trade-off, i.e. higher localization accuracy is achieved at only a fraction of the computational costs compared to state-of-the-art variants of NBP. 
\begin{figure}[t]
	\centering	      
	\includegraphics[width=.5\columnwidth]{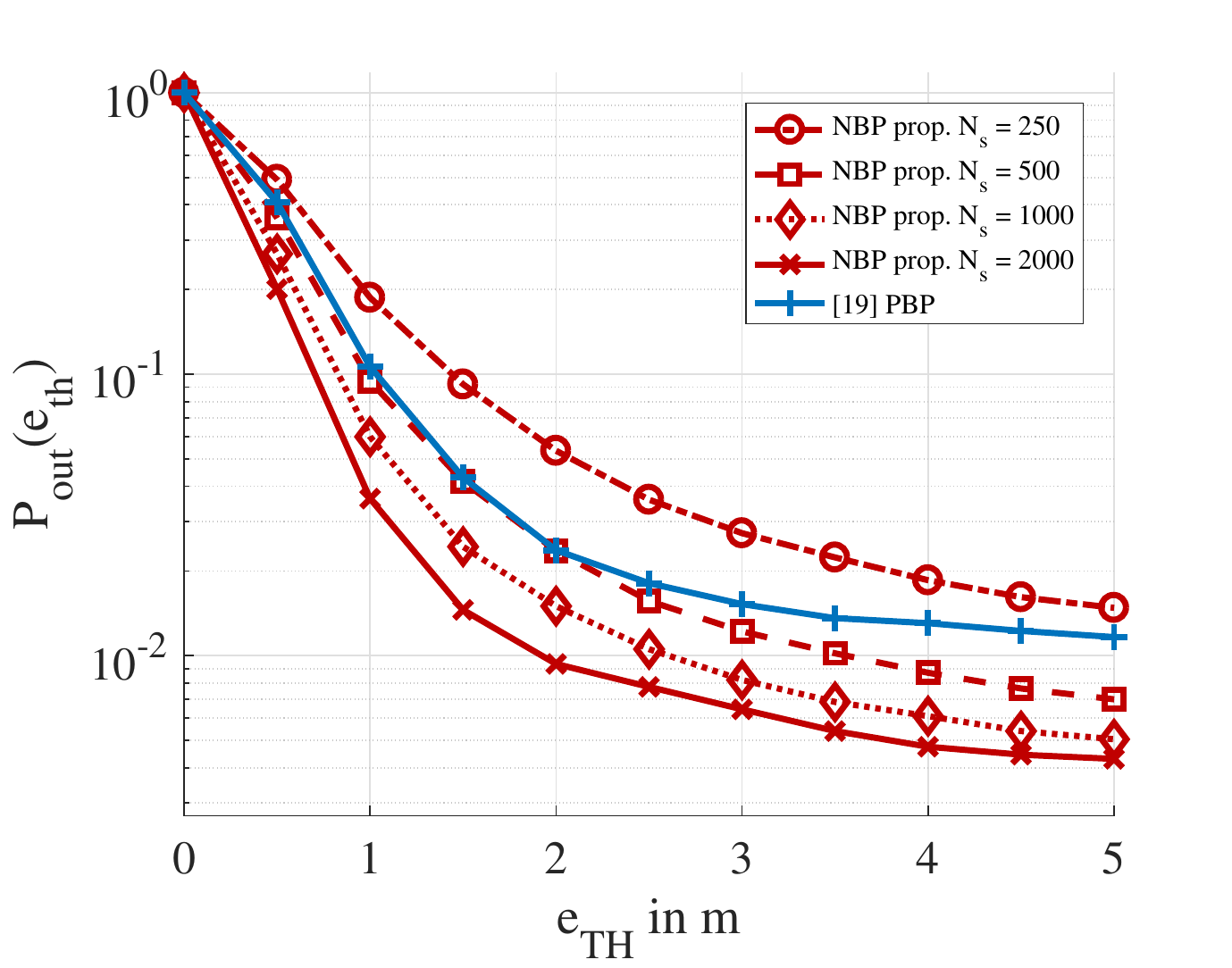}
	\caption{\textit{Outage probability for different sample sizes} - With our proposal distribution, the number of particles can be reduced by factor of four while still outperforming the proposal distribution from \cite{LFSWW2012} in terms of outage probability.}
	\label{fig:number_of_particles}
\end{figure}

\subsection{Discussion and Summary}
Generally, stochastic inference for cooperation localization is a hard task, when only a few anchor nodes are available. In networks with sparse anchor coverage, the positioning uncertainty is generally large. Nonparametric belief propagation is a powerful tool to perform inference as it approaches the performance of the maximum a posteriori estimator, when infinitely many samples are used. We saw in the previous section that there is a trade-off between localization accuracy and computational costs. Due to the computational costs, large sample sizes may become prohibitive, and small sample sizes have to be considered. We showed that state-of-the-art variants of NBP perform relatively poor for a small number of samples, because samples are not used efficiently, i.e. samples resides in areas of the sample space which are far away from the true location. When the anchor coverage is sparse, PBP can outperform NBP with small sample sizes in terms of accuracy \textit{and }computation time. To leverage the full potential of NBP at reasonable computational costs, we proposed a computationally cheap algorithm, called polygon outer-approximation, which confines the sample space to convex polygons. With these constraints, only a small number of samples is necessary to outperform both PBP and other NPB variants in terms of accuracy. 
%\subsubsection{Communication Range}
%Increasing the communication range increases the number of neighboring nodes, i.e. there are generally more neighbors which can assist a node to determine its position. In addition, the size of the polygons can be expected the reduce. Fig. \ref{} contains two ordinate axes, namely,

\section{Conclusion}
\label{section_6}
We treat cooperative positioning in wireless networks as stochastic inference problems, and we proposed a polygon-constrained variant of nonparametric belief propagation to solve these problems. To relax the inference procedure, we split the problem into two stages. In the first stage, we determine constraints on the sample space, which confine the positions of the nodes to convex polygons. We also provide a mathematical proof under which conditions the sample space can be constrained. In the second stage, we solve a constrained stochastic inference problem using our polygon-constrained variant of nonparametric belief propagation to obtain estimates on the positions of nodes. Our proposal shows significantly increased localization accuracy and speed of convergence, compared to state-of-the-art cooperative positioning algorithms. At the same time, the computation time is reduced considerable compared to state-of-the-art nonparametric belief propagation variants. Hence polygon-constrained nonparametric belief propagation offers the benefit of highly accurate localization with reasonable computational costs. 

% if have a single appendix:
\appendix[Proof of Theorem \ref{theorem}]
Let us consider the factorized a posteriori distribution from (\ref{eq:factorized_APD}). Our goal is to confine the support of the marginal a posteriori distributions $p_{\vec{X}_j|\vec{Z}}(\vec{x}_j|\vec{z}), ~ \forall j$ to feasible sets. The marginal a posteriori distribution of $\vec{x}_j$ is formally defined as 
\begin{equation}
	p_{\vec{X}_j|\vec{Z}}(\vec{x}_j|\vec{z}) = \int p_{\vec{X}|\vec{Z}}(\vec{x}|\vec{z})~\text{d} {\raise.17ex\hbox{$\scriptstyle\sim$}}\left\{\vec{x}_j\right\}= \int{\prod_{k=1}^K {f_k(\vec{s}_k)}}~\text{d} {\raise.17ex\hbox{$\scriptstyle\sim$}}\left\{\vec{x}_j\right\},
	\label{eq:definition_marginal}
\end{equation}
where the joint a posteriori distribution $p_{\vec{X}|\vec{Z}}(\vec{x}|\vec{z})$ factorizes into $K$ factors $f_k$ which just depend on subsets of variables, i.e. $\vec{S_K} \subseteq \vec{X}$.
\begin{lemma}
	The support of the product of two densities $f_k(\vec{x}_j,\vec{s}_k\backslash\vec{x}_j) \cdot f_l(\vec{x}_j,\vec{s}_l\backslash\vec{x}_j)$ is determined by the intersection of the support of each density, i.e. $\textsc{supp}(f_k(\vec{x}_j,\vec{s}_k\backslash\vec{x}_j)f_l(\vec{x}_j,\vec{s}_l\backslash\vec{x}_j))=\textsc{supp}(f_k(\vec{x}_j,\vec{s}_k\backslash\vec{x}_j))\cap \textsc{supp}(f_l(\vec{x}_j,\vec{s}_l\backslash\vec{x}_j))$.
\label{lemma:domain_product}
\end{lemma}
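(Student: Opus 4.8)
The plan is to argue directly from the definition of support used in the excerpt, $\textsc{supp}(f)=\{\vec{x}:f(\vec{x})\neq 0\}$, so that the claim becomes a pointwise statement about where a product of functions is nonzero. No closure is involved, since the paper defines the support simply as the region of nonzero probability, which keeps the whole argument at the level of elementary set membership.

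First I would fix the common sample space on which the product is defined. The two factors have overlapping but distinct argument lists $\vec{s}_k$ and $\vec{s}_l$, both containing $\vec{x}_j$, so I would regard each factor as a function on the joint space spanned by $\vec{s}_k\cup\vec{s}_l$, constant in the coordinates on which it does not actually depend. Under this convention $\textsc{supp}(f_k)$ and $\textsc{supp}(f_l)$ are cylinder sets in the joint space, their intersection is well defined, and it coincides with the right-hand side of the asserted identity.

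The core step is then immediate. Fixing a point $\vec{x}$ in the joint space, one has $(f_k f_l)(\vec{x})\neq 0$ if and only if $f_k(\vec{x})\neq 0$ and $f_l(\vec{x})\neq 0$, because the reals have no zero divisors, i.e. a product vanishes exactly when at least one factor vanishes. Reading this equivalence as a statement about set membership yields $\vec{x}\in\textsc{supp}(f_k f_l)$ if and only if $\vec{x}\in\textsc{supp}(f_k)\cap\textsc{supp}(f_l)$, which is the claimed set equality. If a more explicit write-up is wanted, I would split this single biconditional into the two inclusions, but it is not necessary.

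The only part needing genuine care — and the step I expect to be the main obstacle — is the bookkeeping of the differing argument lists: making precise that the two supports are intersected inside the shared joint space and that a factor which is silent in a given coordinate does not spuriously constrain that coordinate. Once the common domain and the cylinder-set reading are stated, the positivity of densities is not even required for this particular identity (only the absence of zero divisors in $\mathbb{R}$ is), and the remaining argument is a one-line consequence of the definition of support.
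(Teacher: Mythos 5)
Your proof is correct and follows essentially the same route as the paper's: both reduce the claim to the pointwise fact that a product of real numbers vanishes exactly when at least one factor does, read off as a statement about set membership under the paper's (closure-free) definition of support. Your explicit cylinder-set extension of the two factors to the common domain $\vec{s}_k\cup\vec{s}_l$ is a welcome tightening of a step the paper handles only implicitly via its two-variable example $f(a,b)g(a,b)$ and the remark that the extension to more variables is straightforward.
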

\begin{proof}
	Consider a simpler case: $f(a,b)g(a,b)$. The support is determined by the closure of the subset of all $a \in A$ and $b\in B$ for which $f(a,b)g(a,b)\neq 0$, i.e. $\textsc{supp}(f(a,b)g(a,b))= \{a\in A, b\in B | f(a,b)g(a,b)\neq 0\}$. The product is either zero if one of the factors is zero or if both are. Thus, the contribution of $\textsc{supp}(f(a,b))$ to $\textsc{supp}(f(a,b)g(a,b))$ is $\{\textsc{supp}(f(a,b))\backslash a,b | g(a,b)=0 \}$. The same holds also for the contribution of $\textsc{supp}(g(a,b))$, i.e. $\{\textsc{supp}(g(a,b))\backslash a,b | f(a,b)=0 \}$. Thus, the support of the product of densities yields the intersection of the support of each density $\textsc{supp}(f(a,b)g(a,b))=\textsc{supp}(f(a,b))\cap\textsc{supp}(g(a,b))$. It is straightforward to extend this idea to more than two variables and factors.
\end{proof}
The support of the marginal a posteriori distribution $p_{\vec{X}_j|\vec{Z}}(\vec{x}_j|\vec{z})$ will be determined by the intersection of the support of the product of densities $\prod_{k=1}^K {f_k(\vec{s}_k)}$. The factors $f_k$ with argument $\vec{s}_k$ will not constrain the support of $\vec{x}_j$ if the sets $\vec{s}_k$ and $\vec{x}_j$ are disjoint, i.e. $\vec{x}_j \cap \vec{s}_k = \emptyset$.

\begin{lemma}
The support of $p_{\vec{X}_j|\vec{Z}}(\vec{x}_j|\vec{z})$ is independent of the domain of ${\raise.17ex\hbox{$\scriptstyle\sim$}}\left\{\vec{x}_j\right\}=\vec{x} \backslash \vec{x}_j$.
\label{lemma:independent_domain}
\end{lemma}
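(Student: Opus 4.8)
The plan is to reduce the support of the marginal back to the support of the full product of factors, which \textbf{Lemma \ref{lemma:domain_product}} has already characterized as an intersection. The starting point is the defining integral \eqref{eq:definition_marginal}, in which the integrand $\prod_{k=1}^K f_k(\vec{s}_k)$ is a product of probability densities and likelihood functions, all of which are non-negative. This non-negativity is the crucial structural feature I would exploit: because there is no cancellation, the integral over $\sim\!\{\vec{x}_j\}$ is strictly positive if and only if the integrand is positive on a set of positive measure.

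First I would fix an arbitrary value of $\vec{x}_j$ and examine the slice of the integrand obtained by holding $\vec{x}_j$ constant while the remaining variables $\sim\!\{\vec{x}_j\}$ range over their domain. By the non-negativity argument, $p_{\vec{X}_j|\vec{Z}}(\vec{x}_j|\vec{z})>0$ precisely when there exists a positive-measure set of values of $\sim\!\{\vec{x}_j\}$ at which all factors $f_k$ are simultaneously positive. Equivalently, $\vec{x}_j$ lies in the support of the marginal if and only if it is the projection onto the $\vec{x}_j$-coordinate of some point in $\textsc{supp}\!\left(\prod_{k=1}^K f_k\right)$.

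The key observation is that this membership condition is a pure \emph{existence} statement: it asks only whether some compatible assignment of the complementary variables exists, not how large the domain of those variables is. Consequently, enlarging or shrinking the domain of $\sim\!\{\vec{x}_j\}$ — provided it still contains the support of the product in those coordinates — can neither add nor remove any $\vec{x}_j$ from the marginal support, which is exactly the claimed independence. Combining this with \textbf{Lemma \ref{lemma:domain_product}}, whereby the support of the product equals $\bigcap_{k}\textsc{supp}(f_k)$, its projection onto $\vec{x}_j$ depends only on those factors whose argument $\vec{s}_k$ actually contains $\vec{x}_j$, i.e. $k\in\mathcal{F}_j$; any factor with $\vec{s}_k\cap\vec{x}_j=\emptyset$ integrates out freely and imposes no constraint. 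This yields precisely the intersection formula \eqref{eq:supp_MAP} on which the subsequent compactness and convexity lemmas build.

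I expect the main obstacle to be the careful handling of the no-cancellation step. For a general signed integrand a positive-measure support could still integrate to zero, which would break the equivalence between positivity of the marginal and positivity of the integrand on a positive-measure slice. The argument therefore hinges on the fact that every factor in \eqref{eq:factorized_APD} is a genuine density or likelihood, hence non-negative; I would make this assumption explicit and, where needed, invoke the standard measure-theoretic fact that the integral of a non-negative measurable function vanishes exactly when the function is zero almost everywhere.
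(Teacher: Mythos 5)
Your proposal is correct and follows essentially the same route as the paper: both argue that integrating out $\sim\!\{\vec{x}_j\}$ cannot shrink the support because the non-negative integrand yields a positive integral exactly where it is positive on some slice, so membership in the marginal support reduces to an existence statement independent of the domain of the remaining variables. The only difference is one of rigor — the paper merely illustrates this with the two-variable toy example $h(a)=\int f(a,b)g(a,b)\,\mathrm{d}b$ and asserts the conclusion, whereas you make explicit the no-cancellation step (a non-negative measurable function integrates to zero iff it vanishes almost everywhere) on which that assertion silently relies.
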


\begin{proof}
Consider the previous example in a slightly altered form: $h(a)=\int{f(a,b)g(a,b)\text{d}b}$. Also consider the support of the product of densities $\textsc{supp}(f(a,b)g(a,b))=\textsc{supp}(f(a,b))\cap\textsc{supp}(g(a,b))$. Integrating of the entire domain of $b$ does not constrain the support of $h(a)$, i.e. $\textsc{supp}(h(a))=\textsc{supp}(f(a)g(a))=\textsc{supp}(f(a))\cap\textsc{supp}(g(a))$. It is readily shown that this is also valid for multiple integration variables and more than two factors.
\end{proof}

Considering Lemma \ref{lemma:domain_product} and \ref{lemma:independent_domain}, we can infer that the support of the marginal a posteriori distribution $p_{\vec{X}_j|\vec{Z}}(\vec{x}_j|\vec{z})$ is compact, i.e. closed and bounded, if at least one factor $f_k(\vec{x}_j,\vec{s}_k\backslash \vec{x}_j)$ has compact support in $\vec{x}_j$. 

We now turn to the next vital condition which allows for computationally efficient outer-approximation, namely convexity of the support of the marginal a posteriori distribution $p_{\vec{X}_j|\vec{Z}}(\vec{x}_j|\vec{z})$. 

\begin{lemma}
The support of the marginal a posteriori distribution $p_{\vec{X}_j|\vec{Z}}(\vec{x}_j|\vec{z})$ is convex if and only if all factors $f_k(\vec{x}_j,\vec{s}_k\backslash \vec{x}_j)$ have convex support in $\vec{x}_j$. 
\label{lemma:convexity}
\end{lemma}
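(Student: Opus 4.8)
The plan is to prove both directions of the biconditional separately, relying on Lemma~\ref{lemma:domain_product} and Lemma~\ref{lemma:independent_domain} already established. The key structural fact is that, by those two lemmas, the support of the marginal equals the intersection of the supports (restricted to $\vec{x}_j$) of exactly those factors $f_k$ with $k\in\mathcal{F}_j$, i.e. $\textsc{supp}(p_{\vec{X}_j|\vec{Z}}) = \bigcap_{k\in\mathcal{F}_j}\textsc{supp}_{\vec{x}_j}(f_k)$. So the entire question reduces to: when is a finite intersection of sets convex? The workhorse is the elementary fact from convex analysis that the intersection of any collection of convex sets is convex \cite{BV2004}.

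For the ``if'' direction I would assume each factor $f_k(\vec{x}_j,\vec{s}_k\backslash\vec{x}_j)$ has convex support in $\vec{x}_j$. Then the marginal support, being a finite intersection of these convex sets, is itself convex, which closes this direction immediately. For the ``only if'' direction I would argue by contraposition: suppose some factor $f_m$, $m\in\mathcal{F}_j$, has non-convex support in $\vec{x}_j$. I would need to show the intersection can then fail to be convex. The cleanest route is to observe that the remaining factors' supports can be taken arbitrarily large (for instance the uniform prior over the whole plane, or any factor whose support contains the offending non-convex set), so that the intersection coincides with $\textsc{supp}_{\vec{x}_j}(f_m)$ on the relevant region; hence the marginal support inherits the non-convexity. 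This yields the contrapositive and completes the equivalence.

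The main obstacle is the ``only if'' direction, and specifically the logical subtlety that a single non-convex factor need not, by itself, force the \emph{intersection} to be non-convex---the intersection of a non-convex set with other sets can accidentally become convex (e.g. if another factor's support carves away the non-convex part). A fully rigorous statement therefore needs care about what is being claimed: either one interprets the theorem as a structural guarantee that holds for \emph{all} admissible choices of the other factors, or one exhibits a witnessing configuration of the other factors for which convexity fails. I would adopt the latter reading and construct the other supports to be large enough that they do not interfere, thereby isolating the non-convexity of $f_m$. This matches the localization setting, where the priors are uniform over the whole plane and thus impose no competing constraint.

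I expect the remaining steps to be routine: invoking closure under intersection for convexity, and reusing Lemma~\ref{lemma:domain_product} and Lemma~\ref{lemma:independent_domain} to reduce the marginal support to the intersection over $\mathcal{F}_j$. Combined with the compactness conclusion already drawn after Lemma~\ref{lemma:independent_domain} (one compact factor suffices to make the intersection bounded, and closedness follows since supports are closed by definition), the convexity equivalence of Lemma~\ref{lemma:convexity} assembles directly into the full statement of \textbf{Theorem~\ref{theorem}}.
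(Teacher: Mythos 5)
Your reduction of the problem to the convexity of $\bigcap_{k\in\mathcal{F}_j}\textsc{supp}(f_k)$ via Lemma~\ref{lemma:domain_product} and Lemma~\ref{lemma:independent_domain}, and your proof of the ``if'' direction by closure of convexity under intersection, coincide exactly with the paper's argument. Where you diverge is the ``only if'' direction, and there your treatment is actually more careful than the paper's: the paper disposes of both directions in one line by asserting that ``the intersection of sets is convex if and only if all sets are convex,'' citing \cite{BV2004}, but that reference only establishes the forward implication (convexity is preserved under intersection); the converse is false in general, since a non-convex set intersected with a suitably chosen convex set can yield a convex intersection. You correctly identify this as the crux and propose to repair it by exhibiting a witnessing configuration in which the remaining factors' supports (e.g.\ the uniform prior over the whole plane) are large enough not to interfere, so that the intersection inherits the non-convexity of the offending factor. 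This buys genuine rigor for the application at hand, at the cost of weakening the lemma from a statement about an arbitrary fixed factorization to one quantified over admissible choices of the other factors --- a caveat you state explicitly and which the paper leaves implicit. Note also that only the ``if'' direction is actually used downstream (it is what guarantees the polygon outer-approximation is well posed), so your extra care does not disturb the assembly of Theorem~\ref{theorem}; it simply makes explicit a gap that the paper's own one-line proof papers over.
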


\begin{proof}
The support of the marginal a posteriori distribution is given by the intersection of the support of each factor $f_k(\vec{x}_j,\vec{s}_k\backslash \vec{x}_j)$ (Lemma \ref{lemma:domain_product}). The intersection of sets is convex if and only if all sets are convex \cite[p. 36]{BV2004}. Hence the intersection of all supports is be convex if and only if the support of each density $f_k(\vec{x}_j,\vec{s}_k\backslash \vec{x}_j)$ is convex. 
\end{proof}

% or
%\appendix  % for no appendix heading
% do not use \section anymore after \appendix, only \section*
% is possibly needed

% use appendices with more than one appendix
% then use \section to start each appendix
% you must declare a \section before using any
% \subsection or using \label (\appendices by itself
% starts a section numbered zero.)
%

%\appendices
%\section{Proof of the First Zonklar Equation}
%Appendix one text goes here.

% you can choose not to have a title for an appendix
% if you want by leaving the argument blank
%\section{}
%Appendix two text goes here.
%
%
%% use section* for acknowledgment
%\section*{Acknowledgment}
%
%
%The authors would like to thank...
%

% Can use something like this to put references on a page
% by themselves when using endfloat and the captionsoff option.
\ifCLASSOPTIONcaptionsoff
  \newpage
\fi

% trigger a \newpage just before the given reference
% number - used to balance the columns on the last page
% adjust value as needed - may need to be readjusted if
% the document is modified later
%\IEEEtriggeratref{8}
% The "triggered" command can be changed if desired:
%\IEEEtriggercmd{\enlargethispage{-5in}}

% references section

% can use a bibliography generated by BibTeX as a .bbl file
% BibTeX documentation can be easily obtained at:
% http://mirror.ctan.org/biblio/bibtex/contrib/doc/
% The IEEEtran BibTeX style support page is at:
% http://www.michaelshell.org/tex/ieeetran/bibtex/
%\bibliographystyle{IEEEtran}
% argument is your BibTeX string definitions and bibliography database(s)
%\bibliography{IEEEabrv,../bib/paper}
%
% <OR> manually copy in the resultant .bbl file
% set second argument of \begin to the number of references
% (used to reserve space for the reference number labels box)
\bibliographystyle{IEEEtran}
\bibliography{bibliography}

\end{document}